\documentclass[11pt]{amsart}
\usepackage[margin=1in]{geometry}
\usepackage{amssymb}
\usepackage{mathtools}
\usepackage[thinc]{esdiff}
\usepackage{amsaddr}
\usepackage[hidelinks]{hyperref}
\usepackage{cleveref}
\usepackage{graphicx}
\usepackage{needspace}
\usepackage{natbib}
\usepackage{ifpdf}

\allowdisplaybreaks[1]

\DeclareMathOperator*{\pr}{pr}
\DeclareMathOperator*{\E}{\textit{E}}

\theoremstyle{plain}
\newtheorem{theorem}{Theorem}
\newtheorem{lemma}{Lemma}
\newtheorem{proposition}{Proposition}

\theoremstyle{definition}
\newtheorem{example}{Example}
\newtheorem{algo}{Algorithm}

\crefname{crefAppendix}{appendix}{appendices}
\Crefname{crefAppendix}{Appendix}{Appendices}

\title{Sampling on Discrete Spaces with Temporal Point Processes}
\author[C. A. Stewart and M. Sahani]{Cameron A. Stewart and Maneesh Sahani}
\address{Gatsby Computational Neuroscience Unit, University College London,\\25 Howland Street, London W1T 4JG, U.K.}
\email{cameron@cameronastew.art}
\email{maneesh@gatsby.ucl.ac.uk}
\thanks{This work was supported by the Gatsby Charitable Foundation (grant numbers GAT 3850 and GAT 4058).}

\subjclass[2020]{Primary 60J76, 65C05; Secondary 60G55, 60K25, 92C20}
\keywords{Birth-death process, cyclostationary process, irreversible process, jump process, Monte Carlo, neural sampling hypothesis, non-reversible process, queueing theory}

\begin{document}

\begin{abstract}
    Temporal point processes offer a powerful framework for sampling from discrete distributions, yet they remain underutilized in existing literature. We show how to construct, for any target multivariate count distribution with downward-closed support, a multivariate temporal point process whose event-count vector in a fixed-length sliding window converges in distribution to the target as time tends to infinity. Structured as a system of potentially coupled infinite-server queues with deterministic service times, the sampler exhibits a discrete form of momentum that suppresses random-walk behaviour. The admissible families of processes permit both reversible and non-reversible dynamics. As an application, we derive a recurrent stochastic neural network whose dynamics implement sampling-based computation and exhibit some biologically plausible features, including relative refractory periods and oscillations. The introduction of auxiliary randomness reduces the sampler to a birth-death process, establishing the latter as a degenerate case with the same limiting distribution. In simulations on 63 target distributions, our sampler always outperforms these birth-death processes and frequently outperforms Zanella processes in multivariate effective sample size, with further gains when normalized by CPU time.
\end{abstract}

\maketitle

\section{Introduction}

The need to efficiently sample from complicated discrete distributions naturally arises in many disciplines, from Bayesian statistics to the computational sciences. Monte Carlo methods based on stochastic processes, particularly Markov chain Monte Carlo, are prominent in the literature, with Gibbs sampling and the Metropolis-Hastings algorithm being two of the best-known examples \citep{geman1984,metropolis1953,hastings1970}. Whilst sampling on continuous spaces has incorporated the use of continuous-time processes, with piecewise-deterministic Markov processes at the forefront \citep{bouchardcote2018,bierkens2019}, continuous-time samplers on discrete spaces have lagged behind significantly. A notable exception is the work of \cite{power2019}, in which both reversible and skew-reversible continuous-time Markov chains (CTMCs) were constructed to sample from multivariate discrete distributions.

Here, we present a method for sampling from a broad family of discrete distributions by simulation of a multivariate temporal point process. This yields a novel and efficient sampling algorithm with desirable properties, such as a discrete form of momentum and a choice between reversibility and non-reversibility. Specifically, our main theorem introduces a sampler for any multivariate count distribution with support \(R_Y \subseteq \mathbb{Z}_{\geq0}^d\) satisfying the condition
\begin{equation*}
    \{0,\dots,y_1\}\times\dots\times\{0,\dots,y_d\} \subseteq R_Y\quad\forall y \in R_Y\,.
\end{equation*}
That is, \(R_Y\) must be downward closed. We specify this target distribution by an unnormalized function \(f: \mathbb{Z}^d_{\geq 0} \rightarrow \mathbb{R}_{\geq0}\) and, without loss of generality, write the probability mass function (PMF) \(\pi: \mathbb{Z}^d_{\geq 0}\rightarrow [0, 1]\) as
\begin{equation}\label{eqn:mass}
    \pi(y) = \frac{f(y)}{Z}\prod_{i=1}^d\frac{1}{y_i!}\quad\forall y \in \mathbb{Z}^d_{\geq 0}\,,
\end{equation}
where \(Z\) is a normalizing constant.

Amongst stochastic processes, point processes have received relatively little attention as samplers. Beyond the elementary example of fixed-interval event counts from Poisson processes generating samples from Poisson distributions, we are aware only of work by \cite{buesing2011}. They developed multivariate temporal point processes to sample from distributions with support \(\{0,1\}^d\), with the aim of proposing a biologically plausible neural implementation of Boltzmann machine dynamics. Specifically, they constructed the sampler such that the limiting distribution of the event-count vector in a fixed-length sliding window is the target. Our work generalizes this pioneering approach significantly with respect both to the family of admissible target distributions and to the family of admissible point processes for each target.

In queueing-theoretic terms, our sampler applied to a \(d\)-dimensional target distribution can be described by \(d\) potentially coupled infinite-server queues with state- and time-dependent arrivals and deterministic service times. We express each queue with the notation \(\mathrm{M}_{Q, t}/\mathrm{D}/\infty\), where \(\mathrm{M}_{Q, t}\) denotes arrival rates as a function of time \(t\) and the state \(Q^-(t)\) just before the current time. The state contains the minimal amount of information needed to make the process Markovian, namely the length of each queue and the arrival times of the entities in each queue. For each target distribution, the sampler induces a family of admissible processes with different arrival rates. In equilibrium, the multivariate queue-length process is cyclostationary and first-order stationary; strict stationarity holds only in special cases. Whilst the queueing theory literature is extensive, to the best of our knowledge the limiting distributions of processes of this form have not been studied.

Closely related to queues are CTMCs, for which transition rates can be constructed from detailed balance or skew-detailed balance equations for a desired stationary distribution. Multivariate birth-death processes, a type of CTMC, are the processes most closely related to our work. In prior work it is often the case that multivariate birth-death processes are presented with the death rate on each component being proportional to that component's state \citep{barbour1988,eichelsbacher2008}. For example, it is known that a CTMC with birth rates \(\gamma(y\rightarrow y + e_i) = f(y + e_i)/f(y)\) and death rates \(\gamma(y\rightarrow y - e_i) = y_i\) on component \(i\), where \(e_i\) is a standard basis vector of \(\mathbb{R}^d\), will have the limiting distribution \(\pi\) in \cref{eqn:mass}. As we will demonstrate, CTMCs of this form can be relatively inefficient samplers compared to our method due to their tendency toward random-walk behaviour.

\section{Preliminaries}

Consider a multivariate point process \(X = (X_1,\dots,X_d)\), where \(X_1,\dots,X_d\) are \(d\) potentially dependent simple point processes on some Polish space \(\mathcal{X}\). More specifically, \(X_i = \{X_{i,1},X_{i,2},\dots\}\) is a random set of point locations in \(\mathcal{X}\) corresponding to the \(i\)th component of \(X\). Throughout this paper we will constrain ourselves to working with point processes that are locally finite: processes for which the number of points in any bounded subset of \(\mathcal{X}\) is almost surely finite. We can define the distribution of \(X\) on some bounded Borel set \(A\subseteq\mathcal{X}\) using local Janossy measures. This form of characterization of a point process is typical for both simple and marked point processes, but here we will provide a definition specific to the multivariate case.

Let \(K(y) = \{(i, k) : i = 1, \dots, d; k = 1, \dots y_i\}\) be the index set corresponding to a count vector \(y\in\mathbb{Z}_{\geq 0}^d\). For point counts \(Y = (\#(X_1\cap A),\dots,\#(X_d\cap A))\), the local Janossy measure of order \(y\) for \(X\) on \(A\) is defined as
\begin{equation*}
    J_{X,y}\Biggl(\prod_{(i,k)\in K(y)}B_{i,k} \Biggm\Vert A\Biggr) = \prod_{i=1}^d y_i!\pr\Biggl(Y=y, \bigcap_{i=1}^d\bigcap_{k=1}^{y_i}\{X_{i,k}\in B_{i,k}\}\Biggr)\,,
\end{equation*}
where \(X_{i,k}\in A\) and \(B_{i,k}\subseteq A\) for all \((i,k)\in K(y)\). Given that points on each component should be treated as indistinguishable from each other, \(J_{X,y}(\,\cdot\,\;\Vert\;A)\) assigns equal probabilities to all permutations of points on each component. Therefore, \(J_{X,y}(\,\cdot\,\;\Vert\;A)\) is symmetric in \(B_{1,1},\dots,B_{1,y_1}\), symmetric in \(B_{2,1},\dots,B_{2,y_2}\), and so on.

The local Janossy densities \(j_{X,y}(\,\cdot\,\;\Vert\;A)\), when they exist, are the densities of \(J_{X,y}(\,\cdot\,\;\Vert\;A)\) with respect to the Lebesgue measure \(\mu\). From here onwards, we will drop the \(y\) subscript as this will, at times, simplify notation greatly. Letting \(\mathcal{P}_{\mathrm{fin}}(A) = \{C\subseteq A : \#C<\infty\}\) be the set of all finite subsets of \(A\), the local Janossy densities of the point process \(X\) on \(A\) will be given by the function \(j_X(\,\cdot\,\;\Vert\;A):\mathcal{P}_{\mathrm{fin}}(A)^d\rightarrow\mathbb{R}_{\geq 0}\). To intuitively interpret these densities, let \(\mathcal{X} = \mathbb{R}\) and consider that \(j_X(\{x_{1,1},\dots,x_{1,y_1}\},\dots,\{x_{d,1},\dots,x_{d,y_d}\} \;\Vert\; A)\mu(\prod_{(i,k)\in K(y)} \mathrm{d}x_{i,k})\) gives the probability of observing exactly \(y_1\) points on the first component in \(A\) located in infinitesimal intervals around \(x_{1,1},\dots,x_{1,y_1}\in A\), exactly \(y_2\) points on the second component in \(A\) located in infinitesimal intervals around \(x_{2,1},\dots,x_{2,y_2}\in A\), and so on. Furthermore, given some Borel set \(B\subseteq A\), its complement \(B^c = A\setminus B\), and locations \(\hat{x}_1,\dots,\hat{x}_d \in \mathcal{P}_{\mathrm{fin}}(B^c)\), the local Janossy densities on \(B^c\) are provided by the marginalization
\begin{multline*}
    j_X(\hat{x}_1,\dots,\hat{x}_d\;\Vert\;B^c)=\\
    \sum_{y\in \mathbb{Z}_{\geq 0}^d}\int_{B^{\sum_{i=1}^d y_i}} \frac{j_X(\{x_{1,1},\dots,x_{1,y_1}\}\cup \hat{x}_1,\dots,\{x_{d,1},\dots,x_{d,y_d}\}\cup \hat{x}_d \;\Vert\; A)}{y_1!\cdots y_d!}\mu\Biggl(\prod_{(i,k)\in K(y)} \mathrm{d}x_{i,k}\Biggr)\,.
\end{multline*}

Finally, we also define conditional Janossy densities on \(B\) given a realization of the point process on \(B^c\), which are expressed as
\begin{equation*}
    j_X(x_1,\dots,x_d \mid \hat{x}_1,\dots,\hat{x}_d\;\Vert\;B \mid B^c) = \frac{j_X(x_1\cup \hat{x}_1,\dots,x_d\cup \hat{x}_d\;\Vert\;A)}{j_X(\hat{x}_1,\dots,\hat{x}_d\;\Vert\;B^c)}
\end{equation*}
for \(x_1,\dots,x_d \in \mathcal{P}_{\mathrm{fin}}(B)\) and \(\hat{x}_1,\dots,\hat{x}_d \in \mathcal{P}_{\mathrm{fin}}(B^c)\). The interpretation of these is the same as for the local Janossy densities, except we are conditioning on a configuration of points in a disjoint Borel set, assuming such a configuration is possible. Whilst the concept of a conditional Janossy density is not prevalent in the point process literature, it is one that will be of great assistance in the proof of our main theorem.

\section{Point processes for sampling}

Consider a multivariate temporal point process \(T = (T_1,\dots,T_d)\) on \(\mathbb{R}_{\geq 0}\), where \(T_i = \{T_{i,1},T_{i,2},\dots\}\) is a random set of points corresponding to the \(i\)th component of the process. Let \(N_i(t) = \#(T_i\cap(0,t])\) be the number of points that have occurred on component \(i\) up to, and including, time \(t \in \mathbb{R}_{\geq 0}\). We define the multivariate counting process \(N = (N(t))_{t \in \mathbb{R}_{\geq0}}\) associated with \(T\) by \(N(t) = (N_1(t),\dots,N_d(t))\). The multivariate conditional intensity function \(\lambda^* = (\lambda^*_1,\dots,\lambda^*_d)\) associated with this process is defined as
\begin{equation*}
    \lambda^*(t) = \lim_{\delta\downarrow0}\frac{\E(N[t + \delta] - N[t] \mid \mathcal{H}_{t-})}{\delta}\,,
\end{equation*}
where \(\mathcal{H}_{t-}\) is the process history up to, but not including, time \(t \in \mathbb{R}_{\geq 0}\). We will only ever consider temporal point processes that admit a conditional intensity process. Following convention, an asterisk superscript should be taken as shorthand, used to imply dependence on the process history. For example, we will frequently write \(\lambda^*\) instead of \(\lambda(\,\cdot\, \mid \mathcal{H}_{\,\cdot\,-})\). Conditional intensities will always be taken to be left-continuous where possible, ensuring uniqueness of the function for a given point process. See \cite{daley2003} for further discussion.

Choose \(m \in \mathbb{R}_{>0}\) to be a time scale for the point process. The precise role of \(m\) will be apparent shortly. Take \(M(t) = (t-m,t]\) and define the state of the sampler at time \(t \in \mathbb{R}_{\geq m}\) to be \(Q(t) = (Q_1(t),\dots,Q_d(t))\), where \(Q_i(t) = T_i\cap M(t)\) is the random set of point locations on component \(i\) in the preceding interval of length \(m\). Denote \(S_i(t) = \#Q_i(t)\ (i = 1, \dots, d)\) as the counts of these points. Define the difference process \(S = (S(t))_{t \in \mathbb{R}_{\geq m}}\) by \(S(t) = (S_1(t),\dots,S_d(t))\), such that it contains the counts of points in all intervals of length \(m\). The name of this process refers to the fact that \(S(t) = N(t) - N(t - m)\). Right-open and left-continuous versions of these objects are given by replacing \(M(t)\) with \(M^-(t) = [t-m,t)\). These versions will always be denoted by a negation superscript, such as \(Q^-_i(t) = T_i\cap M^-(t)\) and \(S^-_i(t) = \#Q^-_i(t)\). With respect to functions in general, \(-\) and \(+\) superscripts will denote left- and right-continuous variants respectively.

Take \(g:\mathcal{P}_{\mathrm{fin}}(\mathbb{R})^d\rightarrow\mathbb{R}_{\geq 0}\) to be a left-continuous, \(m\)-periodic function with the constraint
\begin{equation*}
    \int_{M(x)^{\sum_{i=1}^d y_i}}g(\{x_{1,1},\dots,x_{1,y_1}\},\dots,\{x_{d,1},\dots,x_{d,y_d}\})\mu\Biggl(\prod_{(i,k)\in K(y)} \mathrm{d}x_{i,k}\Biggr)=1\quad\forall y\in\mathbb{Z}_{\geq0}^d
\end{equation*}
for any fixed \(x\in\mathbb{R}\), and downward-closed support such that if \(g(x_1,\dots,x_d) > 0\) for some \(x_1,\dots,x_d\in\mathcal{P}_{\mathrm{fin}}(\mathbb{R})\), then \(g(\hat{x}_1,\dots,\hat{x}_d) > 0\) for all \(\hat{x}_1\subseteq x_1,\dots,\hat{x}_d\subseteq x_d\). These constraints imply that \(g(\varnothing,\dots,\varnothing) = 1\). With respect to Janossy densities and \(g\), we will occasionally use the terms ``right/left-continuous'', ``c\`{a}dl\`{a}g/c\`{a}gl\`{a}d'', and ``periodic''. These terms will signify that the relevant property applies to each element of every finite-set argument of the function. Here, this means \(g\) is left-continuous in \(x_{i,k}\) and returns the same value when \(m\) is added to \(x_{i,k}\) for any \((i,k)\in K(y)\) and any \(y\in\mathbb{Z}_{\geq0}^d\).

Assume there is some initial probability distribution for \(Q^-(m)\) such that \(\lambda^*(m)\) is defined. Consider a base intensity \(\eta^* = (\eta^*_1,\dots,\eta^*_d)\), defined by
\begin{equation*}
    \eta^*_i(t) = \frac{g(Q^-_1[t],\dots,Q^-_i[t]\cup\{t\},\dots,Q^-_d[t])}{g(Q^-_1[t],\dots,Q^-_d[t])}\quad(i = 1, \dots, d)\,.
\end{equation*}
For a process in equilibrium where \(\lambda^*(t) = \eta^*(t)\) for all \(t \in \mathbb{R}_{\geq m}\), the number of points on each component in any window of length \(m\) will be independent and Poisson-distributed with rate parameter \(1\). The distribution of point locations within that window is determined by \(g\). When \(\eta^*\) is independent of the process history, the process is simply composed of \(d\) independent Poisson processes with \(m\)-periodic intensity functions and \(\int_{M(t)}\lambda_i\mathrm{d}\mu = 1\ (i = 1, \dots, d)\). Sampling from \(\pi\) requires a simple modification to this base process.

\begin{theorem}\label{thm:sampler}
    Let \(f\) define \(\pi\) according to \cref{eqn:mass}. If
    \begin{equation*}
        \lambda^*_i(t) = \frac{f(S^-[t] + e_i)}{f(S^-[t])}\eta_i^*(t)\quad(i = 1, \dots, d)
    \end{equation*}
    for all \(t \in \mathbb{R}_{\geq m}\), then \(\pi\) is the limiting distribution of \(S(t)\) as \(t\uparrow\infty\).
\end{theorem}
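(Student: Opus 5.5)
The proof will guess an explicit stationary law for the Markov state \(Q(t)\) (the point configuration in the sliding window \(M(t)\)), verify that it is invariant under the dynamics, and then prove convergence to it from any initial condition. The natural candidate is the window Janossy density
\begin{equation*}
    j(x_1,\dots,x_d\;\Vert\;M(t)) = \frac{f(\#x_1,\dots,\#x_d)}{Z}\, g(x_1,\dots,x_d)\,,
\end{equation*}
with \(x_i\in\mathcal{P}_{\mathrm{fin}}(M(t))\). The constraint on \(g\) (unit integral over \(M(x)^{\sum y_i}\) for every \(y\)) makes the marginal of the count vector under this density exactly \(f(y)/(Z\prod_i y_i!)=\pi(y)\), because \(j\) integrated over ordered locations and divided by \(\prod y_i!\) gives \(\pi\). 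So it suffices to show that the law of \(Q(t)\) converges to this \(j\). The \(m\)-periodicity of \(g\) makes the candidate form identical for every \(t\), which is what allows a first-order stationary, cyclostationary equilibrium.

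\textbf{Invariance.} I will show that if \(Q(t_0)\) has density \(j(\cdot\Vert M(t_0))\), then \(Q(t_0+h)\) has density \(j(\cdot\Vert M(t_0+h))\) for \(0<h\le m\). Split \(M(t_0)\cup M(t_0+h)\) into three parts: \(B_1=(t_0-m,t_0+h-m]\), which is leaving; \(B_2=(t_0+h-m,t_0]\), which is retained; and \(B_3=(t_0,t_0+h]\), which is new. First compute the conditional Janossy density of the points in \(B_3\) given the history on \(B_1\cup B_2\), using the standard likelihood
\begin{equation*}
    \prod_{\text{events } s} \lambda^*_{i_s}(s)\,\exp\Bigl(-\sum_i\int_{B_3}\lambda^*_i\,\mathrm{d}\mu\Bigr).
\end{equation*}
Along any realization, the product of \(\lambda^*\) over events telescopes in both ratios, \(f(S^-+e_i)/f(S^-)\) and \(g(\cdots\cup\{t\})/g(\cdots)\). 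Events leaving the window are handled by left-continuity and downward-closedness, which keep the denominators positive. Because \(h\le m\), each point that exits the window during \((t_0,t_0+h]\) lies in \(B_1\), and the telescoping carries that exit as a ratio as well.

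\textbf{Main obstacle.} The hard step is the compensator term \(\exp(-\int\lambda^*)\), which does not telescope. Instead of computing it directly, I will argue through the joint density on \(B_1\cup B_2\cup B_3\): propose that it equals \(j_{B_1\cup B_2}\times(\text{telescoped product})\times\exp(-\int\lambda^*)\) and integrate out \(B_1\). What must be proved is that the resulting \(B_2\cup B_3\) density equals \(f g/Z\). I will do this infinitesimally, \(h\downarrow0\), and derive a forward (Kolmogorov/Fokker–Planck-type) equation for the Janossy densities of \(Q(t)\). The evolution has three effects: a shift of the window, the boundary loss of points at \(t-m\), and births at \(t\) with rate \(\lambda^*\). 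Substituting \(fg/Z\), the birth flux into configuration \(x\cup\{t\}\) has density
\begin{equation*}
    \frac{f(y)g(x)}{Z}\cdot\frac{f(y+e_i)g(x\cup\{t\})}{f(y)g(x)} = \frac{f(y+e_i)g(x\cup\{t\})}{Z}\,,
\end{equation*}
which exactly matches the target density at the new configuration. A point on component \(i\) at \(t-m\) is identified with \(t\) by periodicity, so the loss flux at the left boundary cancels the total birth intensity, and the density is preserved. This cancellation of the boundary-loss term against the compensator is where I expect the real work to lie. It needs the periodicity of \(g\) and a careful treatment of left/right limits.

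\textbf{Convergence.} The sampled process \((Q(m+km))_{k\ge0}\) is a Markov chain on \(\mathcal{P}_{\mathrm{fin}}([0,m))^d\), and the previous step shows it has invariant law \(j\). Its state-space is downward closed, so from any state there is positive probability of no births over a window whose ratios make the intensity bounded. The chain therefore hits the empty configuration with positive probability in one period, and the empty configuration is an atom. This gives \(\psi\)-irreducibility and aperiodicity with respect to that atom. Positive Harris recurrence follows from the existence of the invariant probability, so the law of \(Q(t)\) converges in total variation to \(j\). This holds first along each residue class \(t=s+km\); because the limit is the same for every \(s\), it holds as \(t\uparrow\infty\). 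Finally, \(S(t)=\#Q(t)\) is a measurable function of \(Q(t)\), so \(S(t)\) converges in distribution to \(\pi\).
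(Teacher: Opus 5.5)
Your proof is correct in outline, but it reaches stationarity by a genuinely different route from the paper. The convergence step is the same as the paper's: skeleton chain at multiples of $m$, positive void probability, and the empty configuration as an accessible atom.

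\textbf{How the paper gets invariance.} It never writes an evolution equation for the law of $Q(t)$. It realizes $T$ as the limit of systematic-scan Gibbs samplers for the $m$-periodic process $X$ with Janossy densities $fg/Z$. It then identifies the limiting intensities as Papangelou intensities of $X$. Finally, it transfers the exact invariance of each Gibbs sampler to $T$, using $L^1_{\mathrm{loc}}$ convergence of intensities (Lemma~1) and Corollary~4.46 of \cite{jacod2003}.

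\textbf{How you get invariance.} You verify the balance directly. The candidate is a fixed function of absolute positions, so there is no transport term. The right-edge condition holds by the ratio form of $\lambda^*$. The inflow $\sum_i f(y+e_i)g(x\cup_i\{t-m\})/Z$, from configurations with an extra point at the left edge, equals the compensator outflow $\sum_i\lambda^*_i(t)f(y)g(x)/Z$ by periodicity.

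\textbf{What each buys.} Your route is more elementary and shows exactly where periodicity and the ratio form of $\lambda^*$ are used. The paper's route gets invariance for free and supplies the Gibbs/Papangelou picture that it reuses in Proposition~1.

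\textbf{Making your step rigorous.} Going infinitesimal forces you to justify a forward equation and the uniqueness of its solutions, which is delicate because the intensities may be unbounded. You can avoid this by finishing your finite-$h$ computation.
\begin{itemize}
\item The ratios telescope to $f(\#x)g(x)\prod_{\mathrm{births}}\lambda^* = f(\#x')g(x')\prod_{\mathrm{departures}}\bar\lambda$.
\item Here $\bar\lambda_i = f(S+e_i)g(Q\cup_i\{u-m\})/(f(S)g(Q))$, evaluated just after the departure at time $u$.
\item By periodicity, $\sum_i\bar\lambda_i=\sum_i\lambda^*_i$ for a.e.\ $u$, so the two compensators coincide exactly.
\item The joint density on $B_1\cup B_2\cup B_3$ is therefore $f(\#x')g(x')/Z$ times the likelihood of the points in $B_1$ under reverse-time dynamics that insert points at the left edge. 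That likelihood integrates to one.
\end{itemize}
This is exactly the instantaneous detailed balance used in the paper's proof of Proposition~1.

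\textbf{Small repairs in the convergence step.} The paper glosses over the first two as well.
\begin{itemize}
\item Convergence along each residue class $s+km$ does not by itself give convergence as $t\uparrow\infty$. Invariance gives $\lVert\mathrm{Law}(Q(km+s))-j\rVert_{\mathrm{TV}}\le\lVert\mathrm{Law}(Q(km))-j\rVert_{\mathrm{TV}}$ (after shifting windows), which is uniform in $s$.
\item An invariant probability plus an accessible atom gives positive recurrence. Harris recurrence from every initial state needs something extra, for example that the one-period kernel is absolutely continuous with respect to $j$. This holds here, since the new points have a density supported where $fg>0$, so the $\psi$-null non-Harris set is never entered.
\item Along the no-birth path the intensity is integrable, not necessarily bounded.
\end{itemize}
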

\begin{proof}
    Consider an \(m\)-periodic multivariate point process \(X = (X_1,\dots,X_d)\) on \(\mathbb{R}\). Being \(m\)-periodic, a point exists at \(x\in\mathbb{R}\) if and only if a point also exists at \(x+m\) on the same component. Our proof strategy is to construct a sequence of Gibbs samplers with the distribution of \(X\) as the target, specified in terms of conditional Janossy densities, where the limit of the sequence gives the multivariate temporal point process of interest and its associated properties. We define the distribution of \(X\) using local Janossy densities on \(M(x)\) for some \(x\in\mathbb{R}\), such that the vector-valued number of points in this interval is \(\pi\)-distributed. This is achieved by letting these densities take the form
    \begin{multline*}
        j_X(\{x_{1,1},\dots,x_{1,y_1}\},\dots,\{x_{d,1},\dots,x_{d,y_d}\} \;\Vert\; M[x])\\
        \begin{aligned}
             & = \prod_{i=1}^d y_i!\pi(y)g(\{x_{1,1},\dots,x_{1,y_1}\},\dots,\{x_{d,1},\dots,x_{d,y_d}\}) \\
             & = \frac{f(y)}{Z}g(\{x_{1,1},\dots,x_{1,y_1}\},\dots,\{x_{d,1},\dots,x_{d,y_d}\})\,
        \end{aligned}
    \end{multline*}
    for any value of \(x\). Given the periodicity constraint, this local Janossy characterization completely defines the distribution of \(X\).

    Now consider a sequence of multivariate temporal point processes \((T^{(n)})_{n=2}^\infty\) on \(\mathbb{R}_{\geq 0}\), each with an associated multivariate counting process \(N^{(n)} = (N^{(n)}(t))_{t \in \mathbb{R}_{\geq0}}\) and multivariate conditional intensity function \(\lambda^{(n)*}\). For some time \(t \in \mathbb{R}_{\geq m}\), we define the state \(Q^{(n)}(t)\) by \(Q^{(n)}_i(t) = T^{(n)}_i\cap M(t)\ (i = 1, \dots, d)\). Each of these point processes will implement a systematic-scan Gibbs sampler for the distribution of \(X\), with \(n\) updates per sweep. Let \(\delta^{(n)} = m/n\) be a length of time and define the intervals \(P^{(n)}(t) = [t - m + \delta^{(n)},t)\) and \(F^{(n)}(t) = [t,t + \delta^{(n)})\). Let \((\tau^{(n)}_r)_{r=0}^\infty\), where \(\tau^{(n)}_r = m + r\delta^{(n)}\), be equally spaced points in time. For each value of \(r\), \(F^{(n)}(\tau^{(n)}_r)\) will be the interval on which the next Gibbs update is performed and \(P^{(n)}(\tau^{(n)}_r)\) the interval on which the update is conditioned. Assuming an admissible common initial distribution on \(M^-(m)\) across \(T\) and \(T^{(n)}\ (n = 2, 3, \dots)\), a Gibbs sampler for the distribution of \(X\) can be constructed by choosing to define the distribution of \(T^{(n)}\) such that
    \begin{equation*}
        j_{T^{(n)}}(\,\cdot\,\;\Vert\;F^{(n)}[\tau^{(n)}_r] \mid P^{(n)}[\tau^{(n)}_r]) = j_X(\,\cdot\,\;\Vert\;F^{(n)}[\tau^{(n)}_r] \mid P^{(n)}[\tau^{(n)}_r])\quad(r = 0, 1, \dots)\,.
    \end{equation*}

    We are interested in the behaviour of these point processes as \(n\uparrow\infty\). To establish convergence in distribution to the process \(T\), characterized by its conditional intensity function \(\lambda^*\), we demonstrate that \(\lambda^{(n)}(\,\cdot\, \mid \mathcal{H}_{\,\cdot\,-}) \rightarrow \lambda(\,\cdot\, \mid \mathcal{H}_{\,\cdot\,-})\) pointwise and in \(L^1_{\textrm{loc}}(\mathbb{R}_{\geq 0}, \mu)\) almost surely. As the conditional intensities are identical on \(M^-(m)\) for all processes, convergence is trivially satisfied on this initial interval.

    For some fixed \(t \in \mathbb{R}_{\geq m}\), consider the sequence \((\varepsilon^{(n)}(t))_{n=2}^\infty\) defined by \(\varepsilon^{(n)}(t) = \tau^{(n)}_r - t\), where for each \(n\), \(r\) is the unique index such that \(\tau^{(n)}_r \in (t, t + \delta^{(n)}]\). That is, the length of time from \(t\) to the end of the current Gibbs update interval is \(\varepsilon^{(n)}(t) \in (0, \delta^{(n)}]\). Let \(G^{(n)}(t) = [t - m + \varepsilon^{(n)}(t),t)\) be the minimal interval of the past that determines the conditional intensity at time \(t\). The conditional intensity functions for process \(n\) for \(t \in \mathbb{R}_{\geq m}\), given the process history \(\mathcal{H}^{(n)}_{t-}\), are
    \begin{align*}
        \lambda^{(n)*}_i(t) & = \lim_{\delta\downarrow 0}\frac{\pr(N^{(n)}[t + \delta] - N^{(n)}[t] = e_i \mid \mathcal{H}^{(n)}_{t-})}{\delta}\quad(i = 1, \dots, d)                                                                      \\
                            & = j_X(\varnothing,\dots,\varnothing,\underbrace{\{t\}}_{i\textrm{th argument}},\varnothing,\dots,\varnothing \mid T^{(n)}_1\cap G^{(n)}[t],\dots,T^{(n)}_d\cap G^{(n)}[t] \;\Vert\;\{t\} \mid G^{(n)}[t])\,.
    \end{align*}
    The limiting form of these conditional intensities for \(t \in \mathbb{R}_{\geq m}\) is then
    \ifpdf
        \begin{multline*}
            \lim_{n\uparrow\infty}\lambda^{(n)}_i(t \mid \mathcal{H}_{t-})\\
            \begin{aligned}
                 & = j_X(\varnothing,\dots,\varnothing,\underbrace{\{t\}}_{i\textrm{th argument}},\varnothing,\dots,\varnothing \mid T_1\cap (t-m, t),\dots,T_d\cap (t-m, t) \;\Vert\;\{t\} \mid (t-m, t)) \\
                 & = \frac{j_X(T_1\cap (t-m, t),\dots,T_i\cap (t-m, t)\cup\{t\},\dots,T_d\cap (t-m, t)\;\Vert\;M[t])}{j_X(T_1\cap (t-m, t),\dots,T_d\cap (t-m, t)\;\Vert\;M[t])}
            \end{aligned}
        \end{multline*}
        \begin{flalign*}
            \hspace{1.1925cm} & \overset{\textrm{a.s.}}{=} \frac{f(S^-[t] + e_i)}{f(S^-[t])}\frac{g(Q^-_1[t],\dots,Q^-_i[t]\cup\{t\},\dots,Q^-_d[t])}{g(Q^-_1[t],\dots,Q^-_d[t])} &  & \\
                              & = \lambda^*_i(t)\quad(i = 1, \dots, d)\,.                                                                                                         &  &
        \end{flalign*}
    \else
        \begin{multline*}
            \lim_{n\uparrow\infty}\lambda^{(n)}_i(t \mid \mathcal{H}_{t-})\\
            \begin{aligned}
                 & = j_X(\varnothing,\dots,\varnothing,\underbrace{\{t\}}_{i\textrm{th argument}},\varnothing,\dots,\varnothing \mid T_1\cap (t-m, t),\dots,T_d\cap (t-m, t) \;\Vert\;\{t\} \mid (t-m, t)) \\
                 & = \frac{j_X(T_1\cap (t-m, t),\dots,T_i\cap (t-m, t)\cup\{t\},\dots,T_d\cap (t-m, t)\;\Vert\;M[t])}{j_X(T_1\cap (t-m, t),\dots,T_d\cap (t-m, t)\;\Vert\;M[t])}                           \\
                 & \overset{\textrm{a.s.}}{=} \frac{f(S^-[t] + e_i)}{f(S^-[t])}\frac{g(Q^-_1[t],\dots,Q^-_i[t]\cup\{t\},\dots,Q^-_d[t])}{g(Q^-_1[t],\dots,Q^-_d[t])}                                       \\
                 & = \lambda^*_i(t)\quad(i = 1, \dots, d)\,.
            \end{aligned}
        \end{multline*}
    \fi
    These are precisely the Papangelou intensities for the distribution of \(X\), up to periodicity constraints. The restrictions on the support of \(\pi\) and \(g\) ensure that the denominators of the conditional intensities always remain greater than zero, even as points exit the sliding window at \(t-m\).

    Pointwise convergence of the conditional intensity functions alone is not sufficient to guarantee convergence in distribution of the corresponding point processes. However, by Corollary 4.46 of \cite{jacod2003}, their convergence in \(L^1_{\textrm{loc}}(\mathbb{R}_{\geq 0}, \mu)\) almost surely \textit{does} guarantee \(T^{(n)} \rightarrow T\) in distribution.  The following lemma ensures this, the proof of which we defer to \Cref{app:l1Proof}.
    \begin{lemma}\label{lem:l1}
        Almost surely, \(\lambda^{(n)}(\,\cdot\, \mid \mathcal{H}_{\,\cdot\,-}) \rightarrow \lambda(\,\cdot\, \mid \mathcal{H}_{\,\cdot\,-})\) in \(L^1_{\textrm{loc}}(\mathbb{R}_{\geq m}, \mu)\).
    \end{lemma}

    Having established convergence of the point processes, we can now pass properties of the processes in the sequence to the limiting process. As \((T^{(n)}, \varepsilon^{(n)}) \rightarrow (T, 0)\) in distribution and there are almost surely no points in \(T\) at any fixed time, the continuous mapping theorem with mappings \((u, v) \mapsto u_i\cap M(t + v[t])\ (i = 1, \dots, d)\) allows us to conclude that \(Q^{(n)}(t + \varepsilon^{(n)}[t]) \rightarrow Q(t)\) in distribution. It follows that, in equilibrium, if \(Q^{(n)}(t + \varepsilon^{(n)}[t])\sim j_X(\,\cdot\,\;\Vert\;M[t + \varepsilon^{(n)}\{t\}])\) for all \(t\), then \(Q(t) \sim j_X(\,\cdot\,\;\Vert\;M[t])\) for all \(t\). Thus, \(\pi\) is a stationary distribution of \(S\).

    Finally, consider a shifted skeleton chain \(\hat{Q} = (\hat{Q}(p))_{p=1}^\infty\), where \(\hat{Q}(p)\) is the state \(Q(pm)\) with all point locations shifted to \(M(m)\) by subtracting \((p-1)m\) from each. As the void probability \(\pr(S[t + m] = 0_d \mid \mathcal{H}_{t-}) = \exp(-\int_{M(t + m)}\sum_{i=1}^d \lambda^*_i\mathrm{d}\mu) > 0\) for all \(t\in\mathbb{R}_{\geq m}\), there is always positive probability that no new points will arrive in a time interval of length \(m\). Hence, with positive probability, we can move from any possible state \(\hat{Q}(p)\) to a state \(\hat{Q}(p + 1)\) with no points. As \(j_X(\,\cdot\,\;\Vert\; M[m])\) is a stationary distribution of \(\hat{Q}\) and the state with no points is accessible from any other state, making the singleton containing it an accessible atom, the chain is positive Harris recurrent. As we can remain in the state with no points over a single step, it is also aperiodic and therefore ergodic. It follows that \(\pi\) is the limiting distribution of \(S(t)\).
\end{proof}

The point process of \Cref{thm:sampler} yields familiar results in two simple cases, the first of which illustrates an application to a distribution with unbounded support.

\begin{example}[The Poisson Distribution and Poisson Processes]
    Consider a univariate PMF target given by the Poisson distribution with rate parameter \(\Lambda\),
    \begin{equation}\label{eqn:poissonTarget}
        \pi(y) = \frac{\Lambda^y \exp(-\Lambda)}{y!}\,,
    \end{equation}
    implying \(f(y) = \Lambda^y\) and \(Z = \exp(\Lambda)\). The point process sampler of \Cref{thm:sampler} for this target has conditional intensity function \(\lambda^*(t) = \eta^*(t)\Lambda^{S^-(t) + 1} / \Lambda^{S^-(t)} = \Lambda\eta^*(t)\). If \(\eta^*\) is independent of the process history, the result is a Poisson process with an \(m\)-periodic intensity function.
\end{example}

\begin{example}[Multivariate Binary Support and Boltzmann Machines]
    It is straightforward to verify that applying \Cref{thm:sampler} with constant \(\eta^*\) to distributions with support \(\{0,1\}^d\) yields the sampling method of \cite{buesing2011}.
\end{example}

Implementation of the point process sampler is relatively straightforward for constant \(\eta^*\), as shown in \Cref{alg:point} below. Literature on simulating inhomogeneous Poisson processes should be consulted for other choices of \(\eta^*\). For example, see \cite{cinlar2013} for the transformation method and \cite{lewis1979} for the thinning method. In the following, assume \(R_Y\) is non-singleton. Consider queues \(\mathcal{Q}_{\mathcal{T}}\) and \(\mathcal{Q}_{\mathcal{C}}\) to initially contain, enqueued in chronological order, the point times and components in \((0,m)\) respectively. The initial value for \(s\in R_Y\) should be the counts of these points. The notation \(\mathrm{Exp}(\xi)\) denotes an exponential distribution with mean \(\xi^{-1}\).
\begin{algo}[Point Process Sampler Simulation for Constant \(\eta^*\)]\label{alg:point}
    \hfill
    \begin{tabbing}
        \qquad Input point counts \(s\), point time queue \(\mathcal{Q}_{\mathcal{T}}\), point component queue \(\mathcal{Q}_{\mathcal{C}}\)\\
        \qquad Set \(t \leftarrow m\)\\
        \qquad Repeat \\
        \qquad \qquad Set \(\lambda^*_i \leftarrow m^{-1}f(s + e_i) / f(s)\ (i = 1, \dots, d)\)\\
        \qquad \qquad Sample a proposed time \(p\) to wait for a point, from \(\mathrm{Exp}(\sum_{i=1}^d \lambda^*_i)\)\\
        \qquad \qquad If \(\mathcal{Q}_{\mathcal{T}}\) is empty or \(t + p < \textsc{peek}(\mathcal{Q}_{\mathcal{T}}) + m\) then\\
        \qquad \qquad \qquad Sample a component index \(c\) with probability proportional to \(\lambda^*_c\)\\
        \qquad \qquad \qquad Set \(t\leftarrow t + p\)\\
        \qquad \qquad \qquad Set \(s\leftarrow s + e_c\)\\
        \qquad \qquad \qquad \(\textsc{enqueue}(\mathcal{Q}_{\mathcal{T}}, t)\)\\
        \qquad \qquad \qquad \(\textsc{enqueue}(\mathcal{Q}_{\mathcal{C}}, c)\)\\
        \qquad \qquad Else\\
        \qquad \qquad \qquad Set \(c\leftarrow\textsc{dequeue}(\mathcal{Q}_{\mathcal{C}})\)\\
        \qquad \qquad \qquad Set \(t\leftarrow\textsc{dequeue}(\mathcal{Q}_{\mathcal{T}}) + m\)\\
        \qquad \qquad \qquad Set \(s\leftarrow s - e_c\)
    \end{tabbing}
\end{algo}
The value of \(m\) has no bearing on the computational efficiency of the sampler, as it merely scales time, so it may simply be set to \(1\).

\section{Properties of the point process sampler}

\subsection{Time reversal}

In this subsection we will work with two-sided versions of the stochastic processes previously introduced. For simplicity, we will continue to use the same letters to denote these objects.

Given a two-sided extension of the difference process of \Cref{thm:sampler} in equilibrium, \(S = (S(t))_{t \in \mathbb{R}}\), we define the time-reversed process \(\bar{S} = (\bar{S}(t))_{t \in \mathbb{R}}\) by \(\bar{S}(t) = S^-(\tau - t)\) for some origin time \(\tau\in\mathbb{R}\). These processes are considered time reversible if and only if
\begin{equation}\label{eqn:reversal}
    (\bar{S}(t_1), \dots, \bar{S}(t_q)) \overset{\textrm{d}}{=} (S(t_1), \dots, S(t_q))\quad(q = 1, 2, \dots)
\end{equation}
for all \(\tau\) and all times \(t_1, \dots, t_q \in \mathbb{R}\). Time-reversibility also implies the condition of strict stationarity, rather than just first-order stationarity.

Denote all objects associated with this time-reversed difference process with the bar notation. \(\bar{T}\) is the resulting multivariate point process, with counting process \(\bar{N} = (\bar{N}(t))_{t \in \mathbb{R}}\), conditional intensity function \(\bar{\lambda}^*\), and sampler state \(\bar{Q}(t)\) at time \(t\). These objects have identical interpretations to their equivalents in the forward-time case. Given the time-reversed difference process, the corresponding point process can be expressed as
\begin{equation*}
    \bar{T} \overset{\textrm{a.s.}}{=} (\{t\in\mathbb{R} : \bar{S}(t) - \bar{S}^-(t) = e_1\}, \dots, \{t\in\mathbb{R} : \bar{S}(t) - \bar{S}^-(t) = e_d\})\,.
\end{equation*}
A two-sided count of points on component \(i\) at time \(t\), with respect to the convention \(\bar{N}_i(0) = 0\), is \(\bar{N}_i(t) = \#(\bar{T}_i\cap(0,t])\) for \(t\in\mathbb{R}_{\geq 0}\), and \(\bar{N}_i(t) = -\#(\bar{T}_i\cap(t,0])\) for \(t\in\mathbb{R}_{< 0}\). The conditional intensity function at time \(t\), given the process history \(\bar{\mathcal{H}}_{t-}\), is \(\bar{\lambda}^*(t) = \lim_{\delta\downarrow0}\E(\bar{N}[t + \delta] - \bar{N}[t] \mid \bar{\mathcal{H}}_{t-})/\delta\).
The state on component \(i\) at time \(t\) is \(\bar{Q}_i(t) = \bar{T}_i\cap M(t)\). The property \(\bar{S}_i(t) = \#\bar{Q}_i(t)\) then follows, implying the mapping between \(\bar{S}\) and \(\bar{T}\) is bijective up to null sets. A similar bijection exists between \(S\) and \(T\).

Define \(\mathcal{L}_a\) as an operator mapping between subsets of \(\mathbb{R}\), such that \(\mathcal{L}_aA = \{a - x : x \in A\}\) for some \(a\in\mathbb{R}\) and \(A\subseteq\mathbb{R}\). It should be emphasized that the reverse-time difference process is \textit{not} the process that occurs under the mappings \(\bar{T}_i = \mathcal{L}_\tau T_i\ (i = 1, \dots, d)\), as this would fail to account for the length of the sliding window. The following lemma addresses this potential source of confusion.
\begin{lemma}\label{lem:reversal}
    If \(\bar{S}(t) = S^-(\tau - t)\), then \(\bar{T}_i = \mathcal{L}_{\tau - m} T_i\ (i = 1, \dots, d)\).
\end{lemma}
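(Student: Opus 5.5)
The proof will be a direct, pathwise identification of the jump times of \(\bar{S}\), using the almost-sure characterization
\begin{equation*}
    \bar{T}_i \overset{\textrm{a.s.}}{=} \{t\in\mathbb{R} : \bar{S}(t) - \bar{S}^-(t) = e_i\}
\end{equation*}
stated just before the lemma. The plan is to express \(\bar{S}(t)\) and \(\bar{S}^-(t)\) through \(S\), then through the counting process of \(T\), and read off which points of \(T\) produce an upward jump of \(\bar{S}\) at time \(t\).

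First, compute the left limit of \(\bar{S}\). Since \(\bar{S}(t) = S^-(\tau - t)\) and \(S^-\) is left-continuous, letting \(s\uparrow t\) gives \(\tau - s\downarrow\tau - t\). Hence \(\bar{S}^-(t) = \lim_{u\downarrow\tau-t}S^-(u) = S(\tau - t)\), because the right limit of \(S^-\) is the right-continuous version \(S\). I will justify this directly from \(S^-_i(u) = \#(T_i\cap[u-m,u))\) and local finiteness: as \(u\downarrow v\), the window \([u-m,u)\) converges to \((v-m,v]\). Consequently
\begin{equation*}
    \bar{S}_i(t) - \bar{S}^-_i(t) = S^-_i(v) - S_i(v) = \#(T_i\cap\{v-m\}) - \#(T_i\cap\{v\}),\qquad v = \tau - t,
\end{equation*}
since \([v-m,v)\) and \((v-m,v]\) differ only in their endpoints.

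Second, read off the jumps. The difference equals \(e_i\) precisely when component \(i\) has a point at \(v-m\) and no point at \(v\), and no other component has an unmatched point at either location. Here I need that almost surely no two points of \(T\) (across all components) are exactly \(m\) apart, and that no two components share a point. I expect this to be the main obstacle. It follows because \(T\) admits a conditional intensity, so the compensator is absolutely continuous, and given the history the event that a new point lands exactly at a previously determined time \(x+m\), or coincides with a point of another component, has probability zero. For a two-sided equilibrium process I would argue via the Janossy densities \(j_X(\,\cdot\,\Vert\,M[t])\) on windows, which are densities with respect to Lebesgue measure, so these coincidences are null events. A countable union over windows of rational endpoints then covers \(\mathbb{R}\).

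Finally, on this almost-sure event, \(t\in\bar{T}_i\) if and only if \(\tau - t - m\in T_i\), that is, \(t\in\{\tau - m - x : x\in T_i\} = \mathcal{L}_{\tau-m}T_i\). This gives \(\bar{T}_i = \mathcal{L}_{\tau-m}T_i\) for each \(i\), up to null sets, consistent with the bijection noted between \(\bar{S}\) and \(\bar{T}\). As a sanity check, I will also verify that \(\#(\mathcal{L}_{\tau-m}T_i\cap(t-m,t]) = \#(T_i\cap[\tau-t-m,\tau-t)) = S^-_i(\tau-t)\), which directly recovers \(\bar{S}(t)\).
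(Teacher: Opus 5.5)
Your proof is correct but takes a different route from the paper's. The paper works only with window counts. It shows $\#(\bar{T}_i\cap M[t]) = \#(T_i\cap M^-[\tau-t]) = \#(\mathcal{L}_{\tau-m}T_i\cap M[t])$ for all $t$, which is exactly your closing sanity check. It then concludes $\bar{T}_i=\mathcal{L}_{\tau-m}T_i$ by invoking the asserted ``bijection up to null sets'' between a configuration and its sliding-window count process. You instead read off the upward jumps of $\bar{S}$ directly, using $\bar{S}^-(t)=S(\tau-t)$ and $\bar{S}_i(t)-\bar{S}^-_i(t)=\#(T_i\cap\{v-m\})-\#(T_i\cap\{v\})$.

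The paper's route is shorter. Yours makes explicit the null set on which that bijection depends. The count map is not injective in general: $T_1=m\mathbb{Z}$ and $T_1=m\mathbb{Z}+m/2$ both give $S_1\equiv 1$. So the paper's ``only possibility'' step tacitly uses exactly the facts you isolate: almost surely no two points are exactly $m$ apart, and no two components share a point.

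One correction to your justification of that fact. Janossy densities on windows of length $m$ cannot rule out pairs exactly $m$ apart, because no half-open window of length $m$ contains both points. Your compensator argument does the job instead. The indicator of $\{x+m : x\in T_j\}$ is predictable and the compensator is absolutely continuous, so the expected number of points landing in this countable set is zero. Densities of the joint law on windows longer than $m$ would also work.
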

\begin{proof}
    For the \(i\)th component, we have
    \begin{align*}
        \bar{S}_i(t)           & = S_i^-(\tau - t)                                     \\
        \#(\bar{T}_i\cap M[t]) & = \#(T_i\cap M^-[\tau - t])                           \\
                               & = \#(\mathcal{L}_{\tau - m}[T_i\cap M^-\{\tau - t\}]) \\
                               & = \#(\mathcal{L}_{\tau - m}T_i\cap M[t])\,.
    \end{align*}
    We know there is a bijection between \(T\) and \(S\), between \(S\) and \(\bar{S}\), and between \(\bar{S}\) and \(\bar{T}\). It follows that there is a bijection between \(T\) and \(\bar{T}\), and to satisfy the final line for all \(t\in\mathbb{R}\) the only possibility is that \(\bar{T}_i = \mathcal{L}_{\tau - m} T_i\).
\end{proof}

Equipped with these definitions and \Cref{lem:reversal}, we can introduce the following proposition, the proof of which is deferred to \Cref{app:reversalProof}. With \(\mathcal{P}_{R_Y}(\mathcal{X}) = \{x\in\mathcal{P}_{\mathrm{fin}}(\mathcal{X})^d : (\#x_1,\dots,\#x_d)\in R_Y\}\), we will write \(g\restriction_{\mathcal{P}_{R_Y}(\mathbb{R})}\) to denote the function \(g\) restricted to the domain \(\mathcal{P}_{R_Y}(\mathbb{R})\).
\begin{proposition}\label{pro:reversal}
    Suppose \(S\) has associated conditional intensity function \(\lambda^*\) for c\`{a}gl\`{a}d \(g\). Then \(\bar{S}\) has associated base intensities
    \begin{equation*}
        \bar{\eta}^*_i(t) = \frac{\bar{g}(\bar{Q}^-_1[t],\dots,\bar{Q}^-_i[t]\cup\{t\},\dots,\bar{Q}^-_d[t])}{\bar{g}(\bar{Q}^-_1[t],\dots,\bar{Q}^-_d[t])}\quad(i = 1, \dots, d)
    \end{equation*}
    and conditional intensities
    \begin{equation*}
        \bar{\lambda}^*_i(t) = \frac{f(\bar{S}^-[t] + e_i)}{f(\bar{S}^-[t])}\bar{\eta}^*_i(t)\quad(i = 1, \dots, d)\,,
    \end{equation*}
    where \(\bar{g}:\mathcal{P}_{\mathrm{fin}}(\mathbb{R})^d\rightarrow\mathbb{R}_{\geq 0}\) is defined by
    \begin{equation*}
        \bar{g}(x_1,\dots,x_d) = g^+(\mathcal{L}_\tau x_1,\dots,\mathcal{L}_\tau x_d)\,.
    \end{equation*}
    \(S\) and \(\bar{S}\) are time-reversible if and only if \(g\restriction_{\mathcal{P}_{R_Y}(\mathbb{R})} = \bar{g}\restriction_{\mathcal{P}_{R_Y}(\mathbb{R})}\) for all \(\tau \in \mathbb{R}\).
\end{proposition}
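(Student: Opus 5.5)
We plan to work directly with the equilibrium law of the sliding-window configuration, which the proof of \Cref{thm:sampler} identifies as \(Q(t)\sim j_X(\,\cdot\,\;\Vert\;M[t])\), with density \(Z^{-1}f(y)g(x_1,\dots,x_d)\). By \Cref{lem:reversal}, \(\bar{T}_i=\mathcal{L}_{\tau-m}T_i\). It follows that \(\bar{Q}_i(t)=\bar{T}_i\cap M(t)\) is the reflection \(\mathcal{L}_{\tau-m}\) of \(T_i\cap M^-(\tau-t+m)\), that is, of \(Q^-_i(\tau-t+m)\). Since \(j_X\) is \(m\)-periodic and \(g\) is c\`{a}gl\`{a}d, the reflection of a configuration in \([a-m,a)\) has Janossy density \(Z^{-1}f(y)g^+(\mathcal{L}_{\tau-m}x)\). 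Using \(m\)-periodicity of \(g\), this equals \(Z^{-1}f(y)\bar{g}(x)\). The right limit appears because half-open endpoints flip under reflection. So we obtain the first step: for every \(t\), \(\bar{Q}(t)\) has local Janossy density \(Z^{-1}f(y)\bar{g}(\,\cdot\,)\) on \(M(t)\). Along the way we check that \(\bar{g}\) is left-continuous, \(m\)-periodic, normalized and downward closed, so that it is an admissible choice of \(g\).

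The second step identifies \(\bar{\lambda}^*\). The whole two-sided process \(T\) in equilibrium is the stationary (cyclostationary) limit of the \(m\)-periodic Gibbs construction in the proof of \Cref{thm:sampler}. We compute the conditional intensity of \(\bar{T}\) at \(t\) given \(\bar{\mathcal{H}}_{t-}\) as a Papangelou-type ratio. Conditioning on the past of \(\bar{T}\) corresponds to conditioning on the future of \(T\). We want the Markov property of \(Q\), in both time directions, to reduce this conditioning to \(\bar{Q}^-(t)\). The intensity should then be
\begin{equation*}
\frac{j_{\bar{Q}(t)}(\bar{Q}^-_1[t],\dots,\bar{Q}^-_i[t]\cup\{t\},\dots)}{j_{\bar{Q}(t)}(\bar{Q}^-[t])}\,,
\end{equation*}
which by the first step equals \(\frac{f(\bar S^-+e_i)}{f(\bar S^-)}\bar\eta^*_i(t)\). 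To justify the reduction, we plan to compare the finite-dimensional Janossy densities of \(T\) on a window \([a,b]\). For the forward process these factor as a product of the initial density \(Z^{-1}fg\) times the exponential survival terms \(\exp(-\int\sum\lambda^*)\) and the intensity factors at the points. We then check that the same expression, read backwards, factors in the same way with \(\bar{g}\) in place of \(g\). Equivalently, we show that the path density of \(T\) on \([a-m,b)\) is symmetric under reflection once \(g\) is replaced by \(\bar{g}\). This is where the main obstacle lies. The forward intensities are built from ratios of \(g\) at moving windows, and one needs a telescoping identity showing that the product of the \(\eta\)-ratios over a path, together with the compensator, is invariant under the swap. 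We would first try to prove it through the Gibbs approximants \(T^{(n)}\). Each systematic-scan update draws from a conditional of the periodic target \(X\), and reversing the scan order gives the Gibbs sampler for the reflected target, whose Janossy densities use \(\bar{g}\). Passing \(n\uparrow\infty\) as in \Cref{lem:l1} then transfers the identification to the limit.

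For the reversibility equivalence, take sufficiency first. If \(g=\bar{g}\) on \(\mathcal{P}_{R_Y}(\mathbb{R})\) for all \(\tau\), then \(\bar{\lambda}^*\) and \(\lambda^*\) are the same functional of the respective histories; only configurations with counts in \(R_Y\) are ever visited. The equilibrium initial laws also coincide, so \(\bar{T}\overset{\mathrm{d}}{=}T\) and hence \(\bar{S}\overset{\mathrm{d}}{=}S\) for all finite-dimensional distributions. For necessity, reversibility with \(q=1\), applied to the full state rather than only its counts, gives equality in law of \(\bar{Q}(t)\) and \(Q(t)\). Strictly, we first note that equality of all finite-dimensional laws of \(S\) determines the law of \(T\), hence of \(Q(t)\), through the bijection between \(S\) and \(T\). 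The first step then yields \(f g=f\bar{g}\) almost everywhere on configurations with counts in \(R_Y\), where \(f>0\). Left-continuity upgrades this to equality everywhere for every \(\tau\).
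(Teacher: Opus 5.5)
Your proposal follows essentially the paper's argument. The identity you isolate in your second step is that the reversed intensity is the Papangelou ratio of the stationary window law at the edge where $\bar{T}$ gains points, namely the trailing edge $t-m$ of $T$. That is exactly what the paper extracts from its ``instantaneous detailed balance'' equation. After that, both proofs reflect through \Cref{lem:reversal} and use $g^+$ and $m$-periodicity to reach $\bar{g}$. Both then settle the reversibility equivalence through the bijection between $S$ and $T$ and a comparison of the window laws $Z^{-1}fg$ and $Z^{-1}f\bar{g}$. Your left-continuity upgrade from almost-everywhere to everywhere equality is a small improvement on the paper.

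Two minor corrections. First, the reflected window is $\bar{Q}_i(t)=\mathcal{L}_{\tau-m}Q^-_i(\tau-t)$, not $\mathcal{L}_{\tau-m}Q^-_i(\tau-t+m)$: you mixed $\mathcal{L}_\tau$ with $\mathcal{L}_{\tau-m}$. This is harmless by periodicity. Second, the telescoping obstacle you anticipate is resolved by the per-point $m$-periodicity of $g$. The forward ratio at $s$ and the backward ratio at $s-m$ are the same function of the same window $T\cap(s-m,s)$. Hence the compensators agree pathwise, and the products of intensity ratios telescope between the two end windows.
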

In addition to invariance under a global reversal of point locations, time-reversibility forces \(g\restriction_{\mathcal{P}_{R_Y}(\mathbb{R})}\) to depend only on the relative arrangement of points, not their absolute positions.

\subsection{Connection to sampling with birth-death processes}

It is well-established that a CTMC with birth rates \(\gamma(y\rightarrow y + e_i) = f(y + e_i)/f(y)\) and death rates \(\gamma(y\rightarrow y - e_i) = y_i\) on component \(i\) will have \(\pi\) as its limiting distribution. The birth transition rate resembles the conditional intensity function of the point process sampler. Here we show that such a birth-death sampler can be constructed by introducing an additional source of randomness to the point process sampler. This randomness effectively destroys point location information, resulting in death times that are independent of birth times.

Consider the sampler of \Cref{thm:sampler}, but from the perspective of \(d\) potentially coupled \(\mathrm{M}_{Q, t}/\mathrm{D}/\infty\) queues rather than a point process. From this viewpoint, \(S(t)\) gives the multivariate queue length, whilst the state \(Q(t)\) also contains the times of arrival to the queues. The process is Markovian with respect to this state. In this subsection we will consider the case where \(\eta^*\) is independent of the process history. We will make a single change to the process compared to that of \Cref{thm:sampler}. This change, which we refer to as a service-time redraw, transforms the process into one which is semi-Markov.

For each \(\varepsilon\in\mathbb{R}_{>0}\), consider a system of \(d\) potentially coupled queues. Let the state at time \(t \in \mathbb{R}_{\geq m}\), taking values in \(\mathcal{P}_{\mathrm{fin}}(M[t])^d\), be denoted by \(U^{(\varepsilon)}(t)\). The length of the \(i\)th queue is \(V_i^{(\varepsilon)}(t) = \#U_i^{(\varepsilon)}(t)\). Arrivals to the \(i\)th queue occur at the rate \(\eta_i(t)f(V_i^{(\varepsilon)-}[t] + e_i)/f(V_i^{(\varepsilon)-}[t])\). At times \(t\in\{m + r\varepsilon : r = 0, 1, \dots\}\) and for each queue \(i = 1, \dots, d\), the \(V^{(\varepsilon)-}_i(t)\) elements of \(U^{(\varepsilon)}_i(t)\) are redrawn to be independent of the process history and each other, distributed according to the probability density function \(\varphi_i(t,\,\cdot\,)\), where
\begin{equation*}
    \varphi_i(t, \hat{t}) = \begin{cases}
        \eta_i(\hat{t}) & \textrm{if } \hat{t}\in M^-(t) \\
        0               & \textrm{otherwise}
    \end{cases}\,.
\end{equation*}
As \(\varepsilon \downarrow 0\), the frequency of these service-time redraws increases arbitrarily, yielding the following result. The proof is deferred to \Cref{app:bridgingProof}.
\begin{proposition}\label{pro:bridging}
    As \(\varepsilon\downarrow 0\), the queue-length process \(V^{(\varepsilon)} = (V^{(\varepsilon)}(t))_{t\in\mathbb{R}_{\geq m}}\) converges in distribution to a time-inhomogeneous birth-death process \(V = (V(t))_{t\in\mathbb{R}_{\geq m}}\) with birth rates
    \begin{equation*}
        \gamma_t(y\rightarrow y + e_i) = \frac{f(y + e_i)}{f(y)}\eta_i(t)\quad(i = 1, \dots, d)
    \end{equation*}
    and death rates
    \begin{equation*}
        \gamma_t(y\rightarrow y - e_i) = y_i\eta_i(t)\quad(i = 1, \dots, d)\,.
    \end{equation*}
    Furthermore, \(V(t)\) retains the limiting distribution \(\pi\) as \(t\uparrow\infty\).
\end{proposition}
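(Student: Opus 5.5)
The plan is to analyse the queue system between consecutive redraw times and identify its infinitesimal generator as \(\varepsilon\downarrow0\).

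\textbf{Dynamics after a redraw.} Immediately after a redraw at time \(u=m+r\varepsilon\), each of the \(V^{(\varepsilon)-}_i(u)\) entities in queue \(i\) has an arrival time \(\hat t\) with density \(\eta_i(\hat t)\) on \(M^-(u)\). Because \(\eta_i\) is \(m\)-periodic with \(\int_{M(u)}\eta_i\,\mathrm{d}\mu=1\), this is a genuine probability density. Such an entity departs at \(\hat t+m\). The probability that it departs in \((u,u+h]\) is therefore \(\int_{u-m}^{u-m+h}\eta_i\,\mathrm{d}\mu=\int_u^{u+h}\eta_i\,\mathrm{d}\mu\) by periodicity, which equals \(\eta_i(u)h+o(h)\). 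So each of the \(y_i\) entities present at a redraw carries a departure hazard \(\eta_i\) at the left endpoint of the window. The per-entity rate after a further time \(s\) is \(\eta_i(u+s)/(1-\int_u^{u+s}\eta_i)\), which stays within \(O(\varepsilon)\) of \(\eta_i(u+s)\) for \(s\le\varepsilon\).

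\textbf{Entities arriving between redraws.} An entity that arrives in \((u,u+\varepsilon)\) cannot depart before \(u+m\), which is after the next redraw. Arrivals occur at the state-dependent rate \(\eta_i(t)f(V^{(\varepsilon)-}+e_i)/f(V^{(\varepsilon)-})\). Hence, on each interval of length \(\varepsilon\), the jump intensities of \(V^{(\varepsilon)}\) are:
\begin{itemize}
\item birth rate \(\eta_i(t)f(y+e_i)/f(y)\), exactly;
\item death rate \(\eta_i(t)\) times the number of pre-redraw entities still present.
\end{itemize}
The number of pre-redraw entities differs from \(y_i\) only by the births in the current interval, an event of probability \(O(\varepsilon)\).

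\textbf{Convergence.} Next I write the compensator of the multivariate counting process of births and deaths for \(V^{(\varepsilon)}\). I will show it converges in \(L^1_{\mathrm{loc}}\) almost surely, along a coupling, to \(\int \bigl(f(V+e_i)/f(V)\bigr)\eta_i\,\mathrm{d}\mu\) and \(\int V_i\eta_i\,\mathrm{d}\mu\). The error per interval is \(O(\varepsilon)\) times the queue length. Summed over \(O(1/\varepsilon)\) intervals on a compact time window, it is \(O(\varepsilon)\) times a locally bounded quantity.

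To control queue lengths, I would use the fact that on compacts they are bounded in probability. The birth rate \(f(y+e_i)/f(y)\) could be unbounded, so I would localise with stopping times at which \(|V|\) exceeds a level \(K\). I would then apply, as in \Cref{lem:l1}, Corollary 4.46 of \cite{jacod2003}, or equivalently convergence of martingale problems for jump processes with locally bounded rates, to get \(V^{(\varepsilon)}\to V\) in distribution. I expect this localisation and the uniform control of the \(O(\varepsilon)\) discrepancies to be the main technical obstacle. A clean route is to bound \(V^{(\varepsilon)}\) by a birth process with rates \(\max_{|y|\le K}f(y+e_i)/f(y)\,\sup\eta_i\) up to the exit time, and let \(K\uparrow\infty\).

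\textbf{Limiting distribution.} For the final claim, I check that \(\pi\) satisfies detailed balance at every fixed \(t\):
\[
\pi(y)\frac{f(y+e_i)}{f(y)}\eta_i(t)=\pi(y+e_i)(y_i+1)\eta_i(t).
\]
This follows from \cref{eqn:mass}, since \(\pi(y+e_i)/\pi(y)=f(y+e_i)/\bigl(f(y)(y_i+1)\bigr)\); the factor \(\eta_i(t)\) cancels. Thus \(\pi Q_t=0\) for every \(t\), so \(\pi\) is invariant for the inhomogeneous semigroup.

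For convergence to \(\pi\), I consider the \(m\)-skeleton chain \(V(pm)\), which is time-homogeneous by periodicity. It has \(\pi\) as an invariant law. The state \(0\) is reachable from every state with positive probability within one period, since deaths occur at positive rate wherever \(\eta_i>0\) and births may be avoided. The state \(0\) can also be held over a period. By the same atom argument as in \Cref{thm:sampler}, the skeleton is positive Harris recurrent and aperiodic, so it converges to \(\pi\). Convergence at intermediate times then follows because \(\pi\) is preserved by the flow from \(pm\) to \(pm+s\).

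A caveat on reachability: if \(\eta_i\) vanishes on parts of the window, I need \(\int_{M}\eta_i=1\) to guarantee positive death probability for each entity over a full period, which it does.
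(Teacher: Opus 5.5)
Your proposal is correct at the paper's level of rigour, and its convergence step is essentially the paper's. You identify the same departure hazard $\eta_i(t)/(1-\int_u^t\eta_i\,\mathrm{d}\mu)$ acting on the entities present at the last redraw. You note that arrivals since that redraw cannot depart before the next one, show $L^1_{\mathrm{loc}}$ convergence of the intensities, and invoke Corollary 4.46 of \cite{jacod2003}.

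The genuine difference is how $\pi$ enters. The paper never works with the limiting generator. It uses the factorization $g=\prod_{(i,k)}\eta_i(x_{i,k})$ to see that each redraw resamples the locations from their equilibrium conditional law given the counts. Hence every $V^{(\varepsilon)}$ is $\pi$-stationary, and the paper passes this through the weak limit and then cites irreducibility in one sentence. You instead check detailed balance for $\gamma_t$ directly and run the skeleton-chain atom argument on $V$ itself.

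What each route buys: yours is elementary, does not need stationarity to survive a weak limit, and actually supplies the ergodicity argument the paper only asserts. The paper's yields the stronger fact that $\pi$ is preserved for every $\varepsilon$, which is what makes the birth--death sampler a degenerate case of the point process sampler.

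Two small caveats on your version. First, passing from $\pi Q_t=0$ to $\pi P_{s,t}=\pi$ needs non-explosion of the limit when $f(y+e_i)/f(y)$ is unbounded. This is the same well-posedness your $K\uparrow\infty$ localisation assumes, and the paper also takes it for granted. Second, your $O(\varepsilon)$ error bounds tacitly assume $\sup\eta_i<\infty$, whereas only local integrability is available. Replace $O(\varepsilon)$ by $\sup_u\int_u^{u+\varepsilon}\eta_i\,\mathrm{d}\mu\to 0$, as the paper does when it chooses $\delta$.
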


With the point process sampler taking the form of potentially coupled \(\mathrm{M}_{Q, t}/\mathrm{D}/\infty\) queues and the birth-death sampler taking the form of potentially coupled \(\mathrm{M}_{Q, t}/\mathrm{M}_t/\infty\) queues, we conjecture that the insensitivity property holds and more general potentially coupled \(\mathrm{M}_{Q, t}/\mathrm{G}_t/\infty\) queues exist which have \(\pi\) as their limiting distribution. See \cite{burman1981} and \cite{zachary2007} for further information on the insensitivity property in queues.

\section{Application: a stochastic neural network model}

It has been hypothesized that neural systems learn to represent probability distributions and use them to make probabilistic inferences about unknown quantities; however, few detailed candidate models for this process exist. One school of thought proposes that these computations may depend on a form of sampling \citep{hoyer2002,fiser2010,berkes2011,buesing2011,pecevski2011,habenschuss2013,hennequin2014,aitchison2016,haefner2016,orban2016,petrovici2016,echeveste2020,festa2021,korcsakgorzo2022,terada2024}. Although by no means a complete biological description, the point process sampler introduced here may offer one route to such a model.

Most neurons communicate by discrete ``firing'' events, the timings of which may be modelled as realizations of a point process. We consider the possibility that the numbers of events within a time window of length \(m\) across a set of \(d\) neurons correspond to a sample from the relevant distribution. Consider distributions on \(R_Y = \mathbb{Z}_{\geq0}^d\) with PMFs \(\pi\) defined according to \cref{eqn:mass} by the functions
\begin{equation}\label{eqn:neuralTarget}
    f(y) = \exp\Biggl(\frac{1}{2}y^TWy + \biggl[b - \frac{1}{2}\operatorname{diag}\{W\}\biggr]^Ty - \sum_{i = 1}^d \frac{\exp(a_1 y_i + a_0)}{\exp(a_1) - 1}\Biggr)
\end{equation}
for symmetric ``weight'' matrix \(W \in \mathbb{R}^{d\times d}\), ``bias'' vector \(b \in \mathbb{R}^d\), and ``refractory'' coefficients \(a_0\in\mathbb{R}\) and \(a_1 \in\mathbb{R}_{>0}\). This form is a generalization to non-negative integers of the Boltzmann machine, or Sherrington-Kirkpatrick model, discussed by \cite{buesing2011}. For fixed \(a_1\), it describes an exponential family of distributions.

By \Cref{thm:sampler}, the counts \(S(t)\) will represent samples from \(\pi\) as \(t\uparrow\infty\) provided they are based on point processes with conditional intensities
\begin{equation}\label{eqn:neuralIntensities}
    \lambda_i^*(t) = \exp([WS^-\{t\} + b]_i - \exp[a_1 S_i^-\{t\} + a_0])\eta^*_i(t)\quad(i = 1, \dots, d)\,.
\end{equation}
Interpreting these intensities as the firing rates of a population of \(d\) neurons yields a stochastic neural network where the matrix entries \(w_{i,j}\) correspond to the strength of connections, or synapses, between neurons. To implement these rates, each neuron must integrate its own firing events as well as incoming events from other neurons within a window of length \(m\). This integration window may correspond to the period of an oscillatory input through the function \(\eta^*\).

The double exponential term in the conditional intensities ensures that \(\pi\) is a valid PMF. In the neural context it enforces a relative refractory period, meaning that a neuron is less likely to fire if it has already done so recently. This overcomes a limitation of the continuous-time neural sampling method introduced by \cite{buesing2011}, where the limiting distribution of the process was no longer provably the target distribution once a biologically necessary relative refractory period was introduced.

As a model of neural inference and learning, we partition the neurons, and therefore the random vector \(Y\sim\pi\), into two groups: observed neurons \(O\), whose state is set by external input, and latent neurons \(L\). Inference involves computation of \(\pi_{L\mid O}\), which is achieved approximately by drawing events according to the intensities of \cref{eqn:neuralIntensities} in the latent neurons, conditioned on the state of the observed neurons. Learning involves updating the parameters \(\Theta = \{w_{i,j} : i = 1, \dots, d; j = i, \dots, d\} \cup \{b_i : i = 1, \dots, d\}\) so that the marginal distribution \(\pi_O\) of the observed neurons \(O\) matches an empirical data distribution, here represented by \(\psi\). The maximum likelihood parameters can be found by following the negated gradient of the KL-divergence \(D_{\mathrm{KL}}(\psi\;\Vert\;\pi_O)\) with respect to parameters. For each \(\theta \in \Theta\), the gradient component is
\begin{equation*}
    \diffp{}{\theta} D_{\mathrm{KL}}(\psi\;\Vert\;\pi_O) = \E_{Y\sim \pi}\biggl(\diffp{}{\theta} \log[f\{Y\}]\biggr) - \E_{\substack{O\sim \psi \\ (L \mid O = o) \sim \pi_{L\mid O}(\,\cdot\, \mid o)}}\biggl(\diffp{}{\theta} \log[f\{Y\}]\biggr)\,,
\end{equation*}
where
\begin{equation*}
    \diffp{}{\theta} \log(f[Y]) =
    \begin{cases}
        Y_iY_j           & \textrm{if } \theta = w_{i,j} \\
        Y_i(Y_i - 1) / 2 & \textrm{if } \theta = w_{i,i} \\
        Y_i              & \textrm{if } \theta = b_i
    \end{cases}\quad(i = 1, \dots, d; j = i + 1, \dots, d)\,.
\end{equation*}
Synaptic update rules of this form are referred to as contrastive Hebbian rules and have been widely studied in neuroscience.

The model is not entirely biologically plausible. In particular, the requirements that connection weights be symmetric and that events be integrated within a single equally weighted, rather than decaying, memory window are not obviously met by biological systems. Nonetheless, it may offer a promising starting point for further development.

\section{Sampling efficiency simulation study}

\subsection{Study design}

We analyse the efficiency of the proposed point process sampler relative to other jump process samplers. This comparison is made in terms of the multivariate effective sample size \textsc{ess} introduced by \cite{vats2019} and the multivariate effective sample size per CPU second \textsc{ess/second}. For the \(d\)-dimensional target distribution \(\pi\) with covariance matrix \(\Xi\) and sampler with asymptotic covariance matrix \(\Sigma\), the \textsc{ess} for \(k\) sequential samples is
\begin{equation*}
    \textsc{ess} = k\biggl(\frac{\det(\Xi)}{\det(\Sigma)}\biggr)^{1/d}\,.
\end{equation*}
As the samplers are operating in continuous time, \(k\) represents the number of sequential \textit{weighted} samples generated. More specifically, we generate \(k\) sequential samples from the embedded jump process, each weighted by their holding times.

We compare the point process sampler of \Cref{thm:sampler} to the birth-death sampler of \Cref{pro:bridging} and to Zanella process samplers \citep{power2019} with three different balancing functions. For the point process and birth-death samplers we let \(m = 1\), although this choice has no impact on the \textsc{ess}, and let \(\eta_i(t) = 1\ (i = 1, \dots, d)\). For the Zanella process samplers we restrict neighbouring states to those which differ by a standard basis vector, and we use the balancing functions \(z \mapsto z^{1/2}\), \(z \mapsto \min(1, z)\), and \(z \mapsto z/(1 + z)\).

For all samplers we run each chain for \(1\,000\,000\) steps of burn-in, followed by an additional \(k = 9\,000\,000\) steps from which we estimate \(\Xi\) and \(\Sigma\). The asymptotic covariance matrix \(\Sigma\) is estimated via batch means with a batch size of \(3\,000\). For estimating the \textsc{ess/second}, we record the CPU time required to simulate those \(9\,000\,000\) steps, ensuring we obtain as fair a comparison between the samplers as reasonably possible. For every combination of the 63 target distributions and 5 samplers, each simulation is run 10 times to obtain estimates with 95\% confidence intervals. The initial state/queue length of each chain is a zero vector. \Cref{fig:essPlots} presents the results. Values of \textsc{ess} are scaled by \(1/9\,000\) so that the plots present \textsc{ess} for \(1\,000\) sequential samples. See \Cref{app:experiments} for more details about the experiments.

\begin{figure}[b]
    \centering
    \includegraphics[width=\linewidth, alt={Plots showing the efficiency of all tested jump process samplers across all tested target distributions. Each distribution type (Poisson, Sherrington-Kirkpatrick, and stochastic neural network) is tested with 21 different scaling values (lambda, beta, and alpha respectively).}]{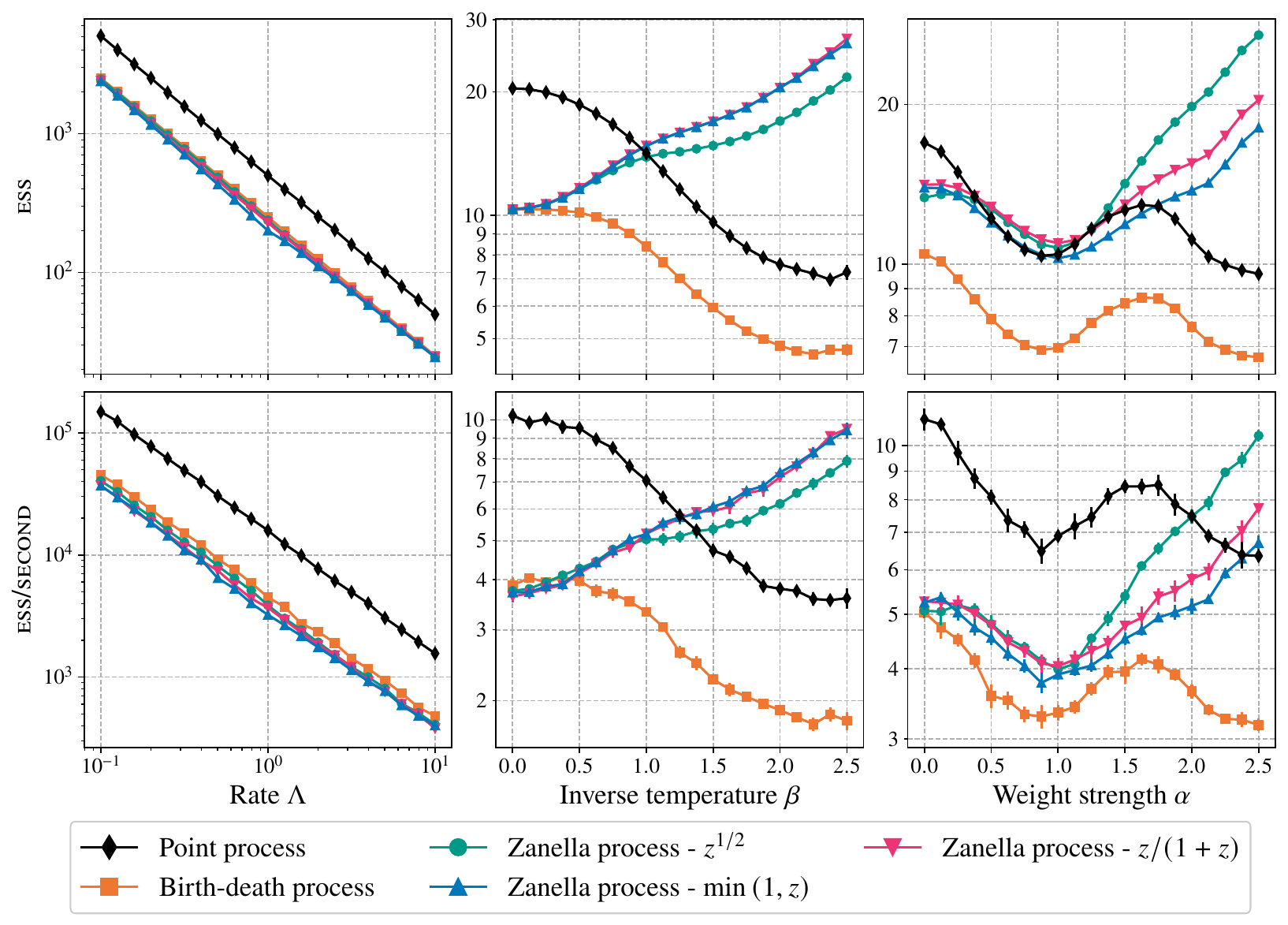}
    \caption{Experimentally derived efficiency estimates for five different jump process samplers applied to 63 different target distributions. Error bars show 95\% confidence intervals, but in most cases they are smaller than the markers and hence not visible. Left column: Poisson distribution (log-log plots). Middle column: Sherrington-Kirkpatrick model (log-linear plots). Right column: stochastic neural network model (log-linear plots). Top row: \textsc{ess} for \(1\,000\) sequential weighted samples. Bottom row: \textsc{ess} per CPU second.}
    \label{fig:essPlots}
\end{figure}

\cite{power2019} also introduce three skew-reversible CTMC samplers: the tabu sampler, the discrete coordinate sampler, and the discrete zig-zag process. These samplers attempt to exploit the algebraic structure available for a given state space, incorporating group actions and lifting to help minimize backtracking. The applicability and benefit of these samplers are highly dependent on the target distribution. We do not test against these for two reasons: (a) no single sampler is suitable for all tested target distributions, and (b) we employ a constant base intensity \(\eta^*\), representing the most basic implementation of our framework, making the likewise reversible Zanella processes the most appropriate benchmarks. We therefore treat Zanella processes as the baseline for experimental comparison, just as in the experimental work of \cite{power2019}.

\subsection{Poisson distribution}

The simplest distribution we evaluate the samplers on is the Poisson distribution of \cref{eqn:poissonTarget}. Whilst there is little interest in using a stochastic process to sample from such a distribution, its simplicity makes it a case where the point process sampler shows a clear advantage with easily interpretable empirical results. The simulations highlight two important points.

First, given that the point process sampler simply results in a Poisson process in this case, \(S(t)\) and \(S(t + m)\) will be independent for all \(t\in\mathbb{R}_{\geq m}\). This is unlike the other samplers, whose transition rates are state-dependent. This provides the point process sampler with a large advantage in \textsc{ess}. This suggests that the point process sampler should perform well on distributions where the resulting point timings are independent or weakly dependent.

Second, \textsc{ess} drops as \(\Lambda\) grows. The more points the point process sampler needs to generate in a window \(m\) time units long, the more inefficiently it will perform. Hence, we suggest that the sampler may perform well for distributions with most of the probability mass concentrated around small numerical values, but more inefficiently with it concentrated around large numerical values. Whilst the other samplers behave similarly in our simulations, we note that we have restricted their jumps to neighbouring states. Removing this restriction could lead to improved results.

\subsection{Sherrington-Kirkpatrick model}

The Sherrington-Kirkpatrick model, also referred to as a fully connected Ising model, is a widely studied model of spin interactions \citep{sherrington1975}. It consists of \(d\) atoms, whose individual spin can be in one of two states, with interactions between all pairs of atoms. An interaction between two atoms can be ferromagnetic, favouring them to take identical spins, or antiferromagnetic, favouring them to take opposite spins. In the absence of an external magnetic field, the spin configuration probabilities are proportional to \(\exp(-\beta H[\sigma])\) for Hamiltonian
\begin{equation*}
    H(\sigma) = -\sum_{i = 1}^d\sum_{j > i}^d J_{i,j}\sigma_i\sigma_j\,,
\end{equation*}
where \(\sigma \in \{-1, 1\}^d\) is the spin configuration, \(J_{i,j} \in \mathbb{R}\) is the interaction between atoms \(i\) and \(j\), and \(\beta\in\mathbb{R}_{\geq 0}\) is the inverse temperature. Often, the interactions are independently sampled from \(\mathcal{N}(0, 1/d)\).

Clearly, this distribution does not have the correct support for our sampler. To account for this, we apply the simple bijective mapping \(\sigma_i \mapsto (\sigma_i + 1) / 2\ (i = 1, \dots, d)\). This results in a PMF with support \(\{0, 1\}^d\) and \(f\) given by
\begin{equation*}
    f(y) = \begin{cases}
        \exp(\beta[y^TWy - b^Ty]) & \textrm{if } y\in \{0, 1\}^d \\
        0                         & \textrm{otherwise}
    \end{cases}\,,
\end{equation*}
where \(b_i = \sum_{j=1}^d W_{i,j}\) and \(W\) is a symmetric matrix with zeros on the main diagonal and entries above/below the main diagonal sampled independently from \(\mathcal{N}(0, 4/d)\). In our simulations we use dimensionality \(d = 100\) and a common \(W\) across all runs. The results indicate that, for this transformed distribution, the point process sampler outperforms existing jump process samplers in terms of \textsc{ess} up until around \(\beta = 1\), after which all three Zanella samplers are superior. This suggests that, at least for constant \(\eta^*\), the point process sampler performs best in the weak-coupling regime.

\subsection{Stochastic neural network model}

Here we consider a multivariate distribution with support \(\mathbb{Z}_{\geq0}^d\). Specifically, we consider the stochastic neural network model of \cref{eqn:neuralTarget} with dimensionality \(d = 100\), refractory coefficients \(a_0 = 0\) and \(a_1 = 1\), bias terms \(b_i = 5\ (i = 1, \dots, d)\), and symmetric weight matrix \(W = \alpha\hat{W}\) with the entries of \(\hat{W}\) sampled independently from \(\mathcal{N}(0, 1/d)\). The matrix \(\hat{W}\) is common to all runs. As with the Sherrington-Kirkpatrick model, we find there to be a region of small weight strength \(\alpha\in\mathbb{R}_{\geq 0}\) where the point process sampler outperforms the other jump process samplers in terms of \textsc{ess}, but this advantage is lost as \(\alpha\) increases. The major advantage of the point process sampler comes from its execution speed, with the \textsc{ess/second} being notably higher than the competing samplers for about 80\% of the range of weight strengths tested.

\subsection{Commonly observed behaviour}

Across all 63 distributions sampled from, we observe the \textsc{ess} of the point process sampler to be 1.4 to 2.0 times that of the birth-death sampler. Given that \Cref{pro:bridging} tells us the birth-death sampler can be constructed by introducing a large amount of uninformative randomness to the point process sampler, these results are not surprising.

It appears as though the point process sampler possesses some sense of ``momentum'', whilst the birth-death sampler does not. As points must remain in memory for exactly \(m\) units of time, the point process sampler cannot immediately backtrack in many circumstances. Take, as a simple example, a 2-dimensional target distribution with support \(\{0, 1\}^2\). If the sampler makes the jump in queue length \((1, 0)\rightarrow(1, 1)\), it is now forced to jump to \((0, 1)\), as points must drop out of memory in the same order they appear. If \((0, 1)\) is of relatively high probability compared to \((1, 1)\), there is a good chance the next jump will be to \((0, 0)\). Otherwise, the sampler will take the trajectory \((0, 1)\rightarrow(1, 1)\rightarrow(1, 0)\), effectively reversing the direction of the momentum. For these reasons, we conjecture that the point process sampler may always have a greater \textsc{ess} than the birth-death sampler, regardless of the target distribution.

When also considering computational cost, the \textsc{ess/second} of the point process sampler rises to 1.9 to 3.6 times that of the birth-death sampler. For any \(d\)-dimensional target distribution, the CTMC samplers all require the computation and use of \(2d\) transition rates, whereas the point process sampler only needs \(d\) conditional intensities. The ability to store point times in a queue and rapidly check to see if a point has fallen out of memory provides the point process sampler with a significant boost in efficiency.

\section{Discussion}

The introduction of the point process sampler brings about new open questions. Some of these questions relate to the application of the method, such as its relevance to modelling neural dynamics. Other questions pertain to the theoretical properties of the method.

One such question that we have yet to address is, ``What is the effect of imposing oscillatory dynamics through the choice of \(g\) on sampling efficiency?'' We conjecture that the choice of \(g\) may have an impact on metrics of efficiency such as mixing time and asymptotic variance. Take, for example, the case where \(\eta^*\) is independent of the process history. If \(\eta_1 = \dots = \eta_d\), we do not expect there to be any meaningful impact on the sampler, as this choice of \(\eta\) can simply be viewed as a nonlinear scaling of time compared to the non-oscillatory case, pushing points together where \(\eta\) is large in magnitude and spreading them out where \(\eta\) is small in magnitude. However, other choices for \(\eta\) cannot be viewed in the same time-scaling manner. Taking inspiration from \cite{he2016} who found that scan order in Gibbs sampling can have a significant impact on mixing times, we suspect that \(\eta^*\) may play a similar role by allowing conditional intensities for different components to vary in magnitude scaling relative to one another. We consider the design of an optimal \(\eta^*\) for a given target to be a mathematically challenging problem that lies beyond the scope of this work.

\section*{Declaration of the use of generative AI and AI-assisted technologies}

During the preparation of this work the authors used ChatGPT and Google Gemini for proofreading and occasional sentence rephrasing. After using these tools/services the authors reviewed and edited the content as necessary and take full responsibility for the content of the publication.

\newpage

\appendix

\crefalias{section}{crefAppendix}
\crefalias{subsection}{crefAppendix}
\crefalias{subsubsection}{crefAppendix}

{\fontsize{10}{12}\selectfont
    \section{Proofs}

    \subsection{Proof of \texorpdfstring{\Cref{lem:l1}}{Lemma \ref{lem:l1}}}\label{app:l1Proof}

    \setcounter{lemma}{0}
    \begin{lemma}[Restated]
        Almost surely, \(\lambda^{(n)}(\,\cdot\, \mid \mathcal{H}_{\,\cdot\,-}) \rightarrow \lambda(\,\cdot\, \mid \mathcal{H}_{\,\cdot\,-})\) in \(L^1_{\textrm{loc}}(\mathbb{R}_{\geq m}, \mu)\).
    \end{lemma}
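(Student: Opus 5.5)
We fix a compact interval \(I = [m, t_0]\) and a single realization of the history. The goal is to show \(\int_I |\lambda^{(n)*}_i - \lambda^*_i|\,\mathrm{d}\mu \to 0\) for each component \(i\). We will use dominated convergence. Pointwise convergence on \(I\) is already available almost everywhere from the computation in the proof of \Cref{thm:sampler}, so the work is to produce an integrable dominating function valid uniformly in \(n\).

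\textbf{Step 1: write \(\lambda^{(n)*}_i\) explicitly.} Using the definition of conditional Janossy densities on \(\{t\}\) given \(G^{(n)}(t)\), the numerator is \(j_X\) on \(M(t+\varepsilon^{(n)}[t])\), restricted to configurations whose only point in \([t, t+\varepsilon^{(n)})\) is \(t\). The denominator marginalizes over configurations in \([t, t+\varepsilon^{(n)})\), and this is at least the empty-configuration term. This gives
\begin{equation*}
    \lambda^{(n)*}_i(t) \le \frac{f(s+e_i)\,g(\dots,q_i\cup\{t\},\dots)}{f(s)\,g(q_1,\dots,q_d)},
\end{equation*}
where \(q_k = T_k\cap G^{(n)}(t)\) and \(s = (\#q_1,\dots,\#q_d)\). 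This bound is of the same form as \(\lambda^*_i\), but evaluated on the slightly shorter window \(G^{(n)}(t)\).

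\textbf{Step 2: bound the right-hand side on \(I\).} On the compact interval \(I\), the realization has finitely many points almost surely, by local finiteness. Hence \(s\) ranges over a finite set of count vectors in \(R_Y\), and the ratios \(f(s+e_i)/f(s)\) are bounded by a constant \(C\). The downward-closed supports of \(f\) and \(g\) keep the denominators positive, and the finite count keeps the sup finite. For the \(g\)-ratio, the pattern of points in \(G^{(n)}(t)\) changes only at finitely many times in \(I\): when a point exits at \(t-m+\varepsilon^{(n)}\), or when a new point arrives. Between those times the ratio, as a function of the new point \(t\), is \(g(q\cup\{t\})/g(q)\) with \(q\) fixed. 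For each fixed finite configuration this is integrable in \(t\) over a window of length \(m\), because integrating the numerator gives the normalization constraint on \(g\) at order \(\#q+e_i\), up to the marginal over the other points. Summing over the finitely many configuration-constancy intervals then gives an integrable envelope on \(I\).

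\textbf{Main obstacle.} The difficulty is that the envelope must be uniform in \(n\), while the configuration \(q\) depends on \(n\) through \(\varepsilon^{(n)}\). The natural fix is to observe that \(G^{(n)}(t)\subseteq M^-(t)\cup\{\text{points near } t-m\}\), so \(q\) is always a sub-configuration of \(Q^-(t)\) or of \(Q^-(t)\) minus points leaving within \(\delta^{(n)}\). There are only finitely many such subsets of the realized points on \(I\). We therefore take the envelope to be the maximum of the \(g\)-ratios over all these subsets; this is a finite maximum of integrable functions, hence integrable.

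If \(g\) itself is unbounded near coincident points, so that this integrability fails, we would instead rely on an a.s. argument. The realized points have a positive minimal gap, so for \(n\) large \(\varepsilon^{(n)}\) is below that gap. Then \(q = Q^-(t)\) except on a set of \(t\) of measure \(O(\delta^{(n)})\) near each point. On that set the integral of the ratio vanishes by absolute continuity of the integral, and on its complement \(\lambda^{(n)*}_i = \lambda^*_i\) up to the vanishing marginalization factor over \([t,t+\varepsilon^{(n)})\). That factor tends to \(1\) uniformly by continuity of the integral.

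Combining Steps 1 and 2 with dominated convergence gives \(L^1(I)\) convergence for every compact \(I\), and hence convergence in \(L^1_{\textrm{loc}}(\mathbb{R}_{\geq m},\mu)\) almost surely.
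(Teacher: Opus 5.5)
There is a genuine gap in Step~1, and it invalidates the envelope you use for dominated convergence.

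The numerator of $j_X(\varnothing,\dots,\{t\},\dots,\varnothing \mid q_1,\dots,q_d\;\Vert\;\{t\}\mid G^{(n)}[t])$ is the local Janossy density on $G^{(n)}(t)\cup\{t\}$. It is therefore marginalized over all configurations in the complementary strip of length $\varepsilon^{(n)}(t)$, just as the denominator is; in the paper's notation this is $\sum_{y}\int_{G^{(n)c}(t)^{\lvert y\rvert}}\cdots/(y_1!\cdots y_d!)$. It is not restricted to configurations with no further points there. The construction forces this: each Gibbs update draws the whole block $F^{(n)}(\tau_r)$ jointly, so at a time $t$ inside the block the rest of the block is still free. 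Lower-bounding only the denominator by its empty-configuration term therefore does not give your inequality.

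For a concrete failure, take $d=1$, $m=1$, $g\equiv 1$, $f(0)=f(1)=1$, $f(2)=10$ and $f(y)=0$ for $y\geq 3$. With $q=\varnothing$ and $x=\varepsilon^{(n)}(t)$ one finds $\lambda^{(n)}(t)=(1+10x)/(1+x+5x^{2})$. This exceeds your bound $f(1)g(\{t\})/(f(0)g(\varnothing))=1$ for every $0<x<9/5$.

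In general the discrepancy is a weighted sum over $y$, with weights $f(s+e_i+y)/(f(s+e_i)\,y_1!\cdots y_d!)$, of strip integrals of $g$ with $t$ and $y$ further points added to $q$, divided by $g$ with only $t$ added. Nothing in your argument controls this uniformly in $n$ and $t$, and for general $g$ it is unbounded. So $C\max_D g(D_1,\dots,D_i\cup\{t\},\dots,D_d)/g(D_1,\dots,D_d)$ is not a majorant for $\lambda^{(n)}_i$.

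The missing idea, which is what the paper uses, has three parts:
\begin{itemize}
\item keep the numerator's marginalization, but make it independent of $n$ by enlarging its integration domain from $G^{(n)c}(t)$ to the full window $M(t)$;
\item divide by the empty-configuration lower bound $j_X(C^{(n)}_1[t],\dots,C^{(n)}_d[t]\;\Vert\;M[t])$ for the denominator;
\item only then maximize over the finitely many sub-configurations $D_k\subseteq T_k\cap(t-m,t)$.
\end{itemize}
The resulting $h_i$ is the dominating function. Your finite-subset maximum in the ``main obstacle'' paragraph coincides with the last step, but it has to be applied to this marginalized quantity rather than to the bare $f$- and $g$-ratios.

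The fallback paragraph does not close the gap either. Invoking absolute continuity of the integral on the exceptional sets of measure $O(\delta^{(n)})$ requires a single integrable majorant for all $n$, which is exactly what is missing. The claim that the marginalization factor tends to $1$ uniformly in $t$ is also unsupported, precisely in the singular-$g$ case that paragraph is meant to handle.
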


    \begin{proof}
        Demonstrating convergence in \(L^1_{\textrm{loc}}(\mathbb{R}_{\geq m}, \mu)\) almost surely can be achieved with Lebesgue's dominated convergence theorem applied to all \(d\) components separately, which is the approach we take here. To aid readability in what follows, let \(\lvert y\rvert = \sum_{i=1}^d y_i\), \(G^{(n)c}(t) = (t-m,t)\setminus G^{(n)}(t)\), \(C^{(n)}_i(t) = T_i\cap G^{(n)}(t)\ (i = 1, \dots, d)\), and \(t_i = \{t_{i,1},\dots,t_{i,y_i}\}\ (i = 1, \dots, d)\). For each component \(i = 1, \dots, d\), we find the dominating function \(h_i(\,\cdot\, \mid \mathcal{H}_{\,\cdot\,-})\):
        \begin{multline*}
            \lambda^{(n)}_i(t \mid \mathcal{H}_{t-}) = j_X(\varnothing,\dots,\varnothing,\overbrace{\{t\}}^{i\textrm{th argument}},\varnothing,\dots,\varnothing \mid C^{(n)}_1[t],\dots,C^{(n)}_d[t] \;\Vert\;\{t\} \mid G^{(n)}[t])\\
            \begin{aligned}
                 & = \frac{\sum_{y\in \mathbb{Z}_{\geq 0}^d}\int_{G^{(n)c}(t)^{\lvert y\rvert}} \frac{j_X(t_1\cup C^{(n)}_1[t],\dots,t_i\cup C^{(n)}_i[t] \cup \{t\},\dots,t_d\cup C^{(n)}_d[t]\;\Vert\;M[t])}{y_1!\cdots y_d!}\mu(\prod_{(i,k)\in K(y)} \mathrm{d}t_{i,k})}{\sum_{y\in \mathbb{Z}_{\geq 0}^d}\int_{G^{(n)c}(t)^{\lvert y\rvert}} \frac{j_X(t_1\cup C^{(n)}_1[t],\dots,t_d\cup C^{(n)}_d[t]\;\Vert\;M[t])}{y_1!\cdots y_d!}\mu(\prod_{(i,k)\in K(y)} \mathrm{d}t_{i,k})} \\
                 & \leq \sum_{y\in \mathbb{Z}_{\geq 0}^d}\int_{M(t)^{\lvert y\rvert}}\frac{j_X(t_1\cup C^{(n)}_1[t],\dots,t_i\cup C^{(n)}_i[t] \cup \{t\},\dots,t_d\cup C^{(n)}_d[t]\;\Vert\;M[t])}{y_1!\cdots y_d!j_X(C^{(n)}_1[t],\dots,C^{(n)}_d[t] \;\Vert\; M[t])}\mu\Biggl(\prod_{(i,k)\in K(y)} \mathrm{d}t_{i,k}\Biggr)                                                                                                                                                          \\
                 & \leq \max_{\substack{{D_1\subseteq T_1\cap (t-m,t)}\\{\vphantom{\int\limits^x}\smash{\vdots}}\\{D_d\subseteq T_d\cap (t-m,t)}}}\sum_{y\in \mathbb{Z}_{\geq 0}^d}\int_{M(t)^{\lvert y\rvert}} \frac{j_X(t_1\cup D_1,\dots,t_i\cup D_i \cup \{t\},\dots,t_d\cup D_d\;\Vert\;M[t])}{y_1!\cdots y_d!j_X(D_1,\dots,D_d\;\Vert\;M[t])}\mu\Biggl(\prod_{(i,k)\in K(y)} \mathrm{d}t_{i,k}\Biggr)                                  \\
                 & = h_i(t \mid \mathcal{H}_{t-})\,.
            \end{aligned}
        \end{multline*}
        The \(j_X(D_1,\dots,D_d\;\Vert\;M[t])\) term in \(h_i\) is piecewise constant in \(t\) and non-zero almost surely. The \(j_X(t_1\cup D_1,\dots,t_i\cup D_i \cup \{t\},\dots,t_d\cup D_d\;\Vert\;M[t])\) term in \(h_i\) is locally integrable in \(t\) as a result of \(g\) being locally integrable. The maximum is taken over finitely many point configurations, the collection of which only changes with \(t\) when a point enters or leaves \((t-m,t)\), and the maximum of finitely many locally integrable functions is also locally integrable. Therefore, \(h_i(\,\cdot\, \mid \mathcal{H}_{\,\cdot\,-})\ (i = 1, \dots, d)\) are almost surely locally integrable and, by Lebesgue's dominated convergence theorem, \(\lambda^{(n)}(\,\cdot\, \mid \mathcal{H}_{\,\cdot\,-}) \rightarrow \lambda(\,\cdot\, \mid \mathcal{H}_{\,\cdot\,-})\) in \(L^1_{\textrm{loc}}(\mathbb{R}_{\geq m}, \mu)\) almost surely.
    \end{proof}

    \subsection{Proof of \texorpdfstring{\Cref{pro:reversal}}{Proposition \ref{pro:reversal}}}\label{app:reversalProof}

    Take all stochastic processes in this subsection to be two-sided and in equilibrium.

    \setcounter{proposition}{0}
    \begin{proposition}[Restated]
        Suppose \(S\) has associated conditional intensity function \(\lambda^*\) for c\`{a}gl\`{a}d \(g\). Then \(\bar{S}\) has associated base intensities
        \begin{equation*}
            \bar{\eta}^*_i(t) = \frac{\bar{g}(\bar{Q}^-_1[t],\dots,\bar{Q}^-_i[t]\cup\{t\},\dots,\bar{Q}^-_d[t])}{\bar{g}(\bar{Q}^-_1[t],\dots,\bar{Q}^-_d[t])}\quad(i = 1, \dots, d)
        \end{equation*}
        and conditional intensities
        \begin{equation*}
            \bar{\lambda}^*_i(t) = \frac{f(\bar{S}^-[t] + e_i)}{f(\bar{S}^-[t])}\bar{\eta}^*_i(t)\quad(i = 1, \dots, d)\,,
        \end{equation*}
        where \(\bar{g}:\mathcal{P}_{\mathrm{fin}}(\mathbb{R})^d\rightarrow\mathbb{R}_{\geq 0}\) is defined by
        \begin{equation*}
            \bar{g}(x_1,\dots,x_d) = g^+(\mathcal{L}_\tau x_1,\dots,\mathcal{L}_\tau x_d)\,.
        \end{equation*}
        \(S\) and \(\bar{S}\) are time-reversible if and only if \(g\restriction_{\mathcal{P}_{R_Y}(\mathbb{R})} = \bar{g}\restriction_{\mathcal{P}_{R_Y}(\mathbb{R})}\) for all \(\tau \in \mathbb{R}\).
    \end{proposition}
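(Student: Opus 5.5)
The plan is to identify the equilibrium law of the windowed configuration, push it through the reflection of \Cref{lem:reversal}, and then read off $\bar{\lambda}^*$ by the same Papangelou-intensity computation used in the proof of \Cref{thm:sampler}.

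\emph{Step 1: the stationary window law.} From the proof of \Cref{thm:sampler}, in equilibrium the state $Q(s)$ has law $j_X(\,\cdot\,\Vert M[s])$ for every $s$, where
\[
j_X(x_1,\dots,x_d\Vert M[s]) = \frac{f(y)}{Z}\,g(x_1,\dots,x_d).
\]
I will also use that the forward dynamics are exactly the limit of the Gibbs constructions. Consequently, the joint law of $T$ restricted to any bounded window is the $m$-periodic law $X$ "unrolled" in time: a point entering at $t$ leaves at $t+m$, and the intensity is the Papangelou ratio.

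\emph{Step 2: reflect and identify the reversed window law.} By \Cref{lem:reversal}, $\bar{T}_i=\mathcal{L}_{\tau-m}T_i$. Hence $\bar{Q}_i(t)$ is the image under $\mathcal{L}_{\tau-m}$ of $Q^-_i(\tau-t+m)$. This should be rewritten via periodicity of $g$ as $\mathcal{L}_\tau$ applied to the corresponding set shifted by $m$. Reflection turns the half-open interval $[\,\cdot\,)$ into $(\,\cdot\,]$ and turns left limits into right limits. So the density of $\bar{Q}(t)$ at configuration $(x_1,\dots,x_d)$ with counts $y$ is
\[
\frac{f(y)}{Z}\,g^+(\mathcal{L}_\tau x_1,\dots,\mathcal{L}_\tau x_d)=\frac{f(y)}{Z}\,\bar{g}(x).
\]
The c\`agl\`ad assumption is what makes $\bar{g}$ again left-continuous, $m$-periodic, normalised and downward closed, since reflection preserves Lebesgue measure and set inclusion. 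Thus $\bar{g}$ is an admissible choice in \Cref{thm:sampler}.

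\emph{Step 3: compute $\bar{\lambda}^*$.} The reversed process is again a sliding-window process in which each point persists for exactly $m$. Its full finite-dimensional law on any window is the reflection of that of $T$, i.e.\ the unrolling of the periodic process $\bar{X}$ with local Janossy density $f(y)\bar{g}/Z$. The conditional intensity at $t$ given $\bar{\mathcal{H}}_{t-}$ depends only on $\bar{Q}^-(t)$, because of the Markov property of the window state, which transfers under time reversal of a stationary Markov process. It equals the Papangelou ratio
\[
\frac{j_{\bar{X}}(\bar{Q}^-\cup\{t\}_i)}{j_{\bar{X}}(\bar{Q}^-)},
\]
which factorises into the stated $f$-ratio times $\bar{\eta}^*_i$. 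The main obstacle is this step: justifying that conditioning on the whole reversed past reduces to conditioning on $(t-m,t)$. I would argue it by repeating the Gibbs-limit computation of the proof of \Cref{thm:sampler} with $X$ replaced by its reflection. Points of $\bar{T}$ before $t-m$ are conditionally irrelevant given the window, because they are irrelevant for $X$ by periodicity and the Janossy factorisation. The care needed is in handling boundary points and left versus right limits, which is exactly where $g^+$ enters.

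\emph{Step 4: the reversibility criterion.} Both $S$ and $\bar{S}$ are processes of the type in \Cref{thm:sampler}, with functions $g$ and $\bar{g}$ respectively. Their laws are determined by the intensities together with the stationary initial law $f g/Z$, respectively $f\bar{g}/Z$, on configurations with counts in $R_Y$. Configurations outside $\mathcal{P}_{R_Y}(\mathbb{R})$ carry zero mass and never enter the intensity ratios, since $f$ vanishes there.

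For the "if" direction: if $g=\bar{g}$ on $\mathcal{P}_{R_Y}(\mathbb{R})$, the two state processes $Q$ and $\bar{Q}$ have the same law. Then the count processes satisfy \cref{eqn:reversal}, and this holds for every $\tau$.

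For the "only if" direction: if reversibility holds for all $\tau$, then $\bar{T}\overset{\mathrm{d}}{=}T$ via the bijection $S\leftrightarrow T$. So the window densities coincide, and $fg=f\bar{g}$ almost everywhere. Left-continuity of $g$ and $\bar{g}$ upgrades this to equality everywhere on $\mathcal{P}_{R_Y}(\mathbb{R})$.
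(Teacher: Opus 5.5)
\textbf{Where the proposal stands.} Your Step~2, and the reversibility argument in Step~4 (granted Step~3), line up with the paper. The window law $f\bar g/Z$ of $\bar Q(t)$ follows from \Cref{lem:reversal} and stationarity. Equality in law of $T$ and $\bar T$ forces equal window densities; conversely, $g=\bar g$ on $\mathcal{P}_{R_Y}(\mathbb{R})$ gives indistinguishable intensities. There is a minor slip: since $\bar S(t)=S^-(\tau-t)$, one has $\bar Q_i(t)=\mathcal{L}_{\tau-m}Q^-_i(\tau-t)$, not $Q^-_i(\tau-t+m)$.

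\textbf{The gap is Step~3.} This step is the substance of the first half of the proposition.
\begin{itemize}
\item The Markov property of the reversed window process tells you that $\bar\lambda^*_i(t)$ is some function of $\bar Q^-(t)$, but not which function. The one-time marginal $f\bar g/Z$ does not determine it either.
\item Calling $\bar T$ ``the unrolling of $\bar X$'' presupposes that the law of $T$ on windows longer than $m$ is a reflection-equivariant functional of $X$. It is not obviously so: it is the path law of a directed dynamics, in which points enter on the right and leave deterministically on the left. That its mirror image is the sampler built from $\bar X$ is precisely what has to be proved.
\item Repeating the Gibbs-limit computation with $\bar X$ only computes the intensities of that sampler. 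It gives no link to the time reversal of $T$.
\item Your stated reason, that points before $t-m$ are irrelevant ``for $X$ by periodicity and the Janossy factorisation'', is a statement about $X$, not about the reversed dynamics of $T$. So it does not establish the claim.
\end{itemize}

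\textbf{What is needed.} You need a mechanism that converts forward-time information into the reversed jump rate. Concretely, this is the rate at which the forward process has a point at the left edge $t-m$ of its window, given the interior $T\cap(t-m,t)$ and the future. There are two ways to supply it.

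\emph{The paper's route.} The paper uses an instantaneous detailed balance identity. In equilibrium, the window density on $M^-(t)$ times the forward Papangelou rate of adding a point at $t$ equals the window density on $M(t)$ times the conditional density of a point at $t-m$ given $T\cap(t-m,t)$. This identifies the backward intensity $\lambda_i(t-m\mid\mathcal{F}_{(t-m)+})$ as the Papangelou conditional at $t-m$. The map $t\mapsto\tau-m-t$ from \Cref{lem:reversal} then gives $\bar\lambda^*$. The right limit $g^+$ comes from taking the left-continuous version in reversed time, and periodicity turns $\mathcal{L}_{\tau-m}$ into $\mathcal{L}_\tau$.

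\emph{Repairing your Gibbs idea.} Each block kernel $j_X(\,\cdot\,\Vert F^{(n)}\mid P^{(n)})$ is self-adjoint with respect to the law of $X$. Hence the stationary time reversal of the systematic-scan chain $T^{(n)}$ is the Gibbs chain with the scan order reversed. Reflecting it by $\mathcal{L}_{\tau-m}$ gives a forward-scan Gibbs sampler for $\bar X$, to which the limit computation of \Cref{thm:sampler} applies. You would still need to justify passing the reflection through the $n\uparrow\infty$ limit, for instance via two-sided stationary versions and the continuous mapping theorem.

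One of these arguments must be supplied; as written, Step~3 asserts its conclusion.
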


    \begin{proof}
        {Let \(\mathcal{F}_{t+}\) be the future of \(T\) from, but not including, time \(t\). By recognizing the \(m\)-periodic structure of the local Janossy densities of \(X\) from the proof of \Cref{thm:sampler}, it is possible to write an instantaneous detailed balance equation for the point process sampler in equilibrium at time \(t\),
        \begin{multline*}
            j_X(t_1 \cup \check{t}_1,\dots,t_d \cup \check{t}_d\;\Vert\;M^-[t])j_X(\hat{t}_1,\dots,\hat{t}_d \mid t_1,\dots,t_d\;\Vert\;\{t\} \mid (t-m,t))\\
            = j_X(t_1 \cup \hat{t}_1,\dots,t_d \cup \hat{t}_d\;\Vert\;M[t])j_X(\check{t}_1,\dots,\check{t}_d \mid t_1,\dots,t_d\;\Vert\;\{t-m\} \mid (t-m,t))\,,
        \end{multline*}
        where \(t_1,\dots,t_d \in \mathcal{P}_{\mathrm{fin}}((t-m,t))\), \(\hat{t}_1,\dots,\hat{t}_d \in \mathcal{P}_{\mathrm{fin}}(\{t\})\), and \(\check{t}_1,\dots,\check{t}_d \in \mathcal{P}_{\mathrm{fin}}(\{t-m\})\). In highlighting the fact that
        \begin{multline*}
            \lambda_i(t \mid \mathcal{H}_{t-})\\
            = j_X(\varnothing,\dots,\varnothing,\underbrace{\{t\}}_{i\textrm{th argument}},\varnothing,\dots,\varnothing \mid T_1\cap (t-m, t),\dots,T_d\cap (t-m, t) \;\Vert\;\{t\} \mid (t-m, t))\quad(i = 1, \dots, d)\,,
        \end{multline*}
        it immediately becomes clear from instantaneous detailed balance that
        \begin{multline*}
            \lambda_i(t-m \mid \mathcal{F}_{(t-m)+})\\
            = j_X(\varnothing,\dots,\varnothing,\underbrace{\{t-m\}}_{i\textrm{th argument}},\varnothing,\dots,\varnothing \mid T_1\cap (t-m, t),\dots,T_d\cap (t-m, t)\;\Vert\;\{t-m\} \mid (t-m, t))\quad(i = 1, \dots, d)
        \end{multline*}
        are conditional intensities for \(T\) that depend on the process future and tell us the rate at which points occur backwards in time. Under the time mapping \(t \mapsto \tau - m - t\) obtained by \Cref{lem:reversal}, we deduce that the conditional intensities for \(\bar{T}\) are
        \begin{multline*}
            \bar{\lambda}_i(t \mid \bar{\mathcal{H}}_{t-}) = \lambda^+_i(\tau-m-t \mid \mathcal{F}_{(\tau-m-t)+})\quad(i = 1, \dots, d)\\
            \begin{aligned}
                 & = \begin{multlined}
                         j_X^+(\varnothing,\dots,\varnothing,\overbrace{\{\tau - m - t\}}^{i\textrm{th argument}},\varnothing,\dots,\varnothing \mid T_1\cap (\tau - m - t, \tau - t),\dots,T_d\cap (\tau - m - t, \tau - t)\\
                         \Vert\;\{\tau - m - t\} \mid (\tau - m - t, \tau - t))
                     \end{multlined}                                                                                                                                                             \\
                 & = \begin{multlined}
                         j_X^+(\varnothing,\dots,\varnothing,\overbrace{\mathcal{L}_{\tau - m}\{t\}}^{i\textrm{th argument}},\varnothing,\dots,\varnothing \mid \mathcal{L}_{\tau - m}\bar{T}_1\cap \mathcal{L}_{\tau - m}(t-m, t),\dots,\mathcal{L}_{\tau - m}\bar{T}_d\cap \mathcal{L}_{\tau - m}(t-m, t)\\
                         \Vert\;\mathcal{L}_{\tau - m}\{t\} \mid \mathcal{L}_{\tau - m}(t - m, t))
                     \end{multlined}                                                                             \\
                 & = \frac{j_X^+(\mathcal{L}_{\tau - m}[\bar{T}_1\cap (t-m, t)],\dots,\mathcal{L}_{\tau - m}[\bar{T}_i\cap (t-m, t)\cup\{t\}],\dots,\mathcal{L}_{\tau - m}[\bar{T}_d\cap (t-m, t)]\;\Vert\;\mathcal{L}_{\tau - m}M[t])}{j_X^+(\mathcal{L}_{\tau - m}[\bar{T}_1\cap (t-m, t)],\dots,\mathcal{L}_{\tau - m}[\bar{T}_d\cap (t-m, t)]\;\Vert\;\mathcal{L}_{\tau - m}M[t])} \\
                 & \overset{\textrm{a.s.}}{=} \frac{f(\bar{S}^-[t] + e_i)}{f(\bar{S}^-[t])}\frac{g^+(\mathcal{L}_{\tau - m}\bar{Q}^-_1[t],\dots,\mathcal{L}_{\tau - m}[\bar{Q}^-_i\{t\}\cup\{t\}],\dots,\mathcal{L}_{\tau - m}\bar{Q}^-_d[t])}{g^+(\mathcal{L}_{\tau - m}\bar{Q}^-_1[t],\dots,\mathcal{L}_{\tau - m}\bar{Q}^-_d[t])}                                                   \\
                 & = \frac{f(\bar{S}^-[t] + e_i)}{f(\bar{S}^-[t])}\frac{g^+(\mathcal{L}_{\tau}\bar{Q}^-_1[t],\dots,\mathcal{L}_{\tau}[\bar{Q}^-_i\{t\}\cup\{t\}],\dots,\mathcal{L}_{\tau}\bar{Q}^-_d[t])}{g^+(\mathcal{L}_{\tau}\bar{Q}^-_1[t],\dots,\mathcal{L}_{\tau}\bar{Q}^-_d[t])}                                                                                                \\
                 & = \frac{f(\bar{S}^-[t] + e_i)}{f(\bar{S}^-[t])}\frac{\bar{g}(\bar{Q}_1^-[t],\dots,\bar{Q}_i^-[t]\cup\{t\},\dots,\bar{Q}_d^-[t])}{\bar{g}(\bar{Q}_1^-[t],\dots,\bar{Q}_d^-[t])}                                                                                                                                                                                      \\
                 & = \frac{f(\bar{S}^-[t] + e_i)}{f(\bar{S}^-[t])}\bar{\eta}_i^*(t)\,.
            \end{aligned}
        \end{multline*}
        Given the Gibbs sampling construction, this outcome should not be unexpected. The reverse-time process of a Gibbs sampler with systematic scan order is simply another Gibbs sampler with the scan order reversed.

        The condition for time-reversibility, as stated in \cref{eqn:reversal}, demands equality in finite-dimensional distributions. However, the Kolmogorov extension theorem states that these consistent collections of distributions are sufficient to characterize the stochastic processes \(S\) and \(\bar{S}\) entirely. We also know there exists a bijection between \(S\) and \(T\), and between \(\bar{S}\) and \(\bar{T}\). Therefore, for the processes to be time-reversible, \(T\) and \(\bar{T}\) must be equal in distribution for all \(\tau \in \mathbb{R}\). Equality in distribution for \(T\) and \(\bar{T}\) implies that, for both processes, the distribution on \(M(t)\) for some \(t\in\mathbb{R}\) must be \(j_X(\,\cdot\,\;\Vert\;M[t])\), further implying that \(g\restriction_{\mathcal{P}_{R_Y}(\mathbb{R})} = \bar{g}\restriction_{\mathcal{P}_{R_Y}(\mathbb{R})}\). But \(g\restriction_{\mathcal{P}_{R_Y}(\mathbb{R})} = \bar{g}\restriction_{\mathcal{P}_{R_Y}(\mathbb{R})}\) implies that \(\lambda(\,\cdot\, \mid \mathcal{H}_{\,\cdot\,-})\) and \(\bar{\lambda}(\,\cdot\, \mid \mathcal{H}_{\,\cdot\,-})\) are indistinguishable, which further implies equality in distribution for \(T\) and \(\bar{T}\). Therefore, time-reversibility is satisfied if and only if \(g\restriction_{\mathcal{P}_{R_Y}(\mathbb{R})} = \bar{g}\restriction_{\mathcal{P}_{R_Y}(\mathbb{R})}\) for all \(\tau \in \mathbb{R}\).}
    \end{proof}

    \Needspace{6\baselineskip}
    \subsection{Proof of \texorpdfstring{\Cref{pro:bridging}}{Proposition \ref{pro:bridging}}}\label{app:bridgingProof}

    \setcounter{proposition}{1}
    \begin{proposition}[Restated]
        As \(\varepsilon\downarrow 0\), the queue-length process \(V^{(\varepsilon)} = (V^{(\varepsilon)}(t))_{t\in\mathbb{R}_{\geq m}}\) converges in distribution to a time-inhomogeneous birth-death process \(V = (V(t))_{t\in\mathbb{R}_{\geq m}}\) with birth rates
        \begin{equation*}
            \gamma_t(y\rightarrow y + e_i) = \frac{f(y + e_i)}{f(y)}\eta_i(t)\quad(i = 1, \dots, d)
        \end{equation*}
        and death rates
        \begin{equation*}
            \gamma_t(y\rightarrow y - e_i) = y_i\eta_i(t)\quad(i = 1, \dots, d)\,.
        \end{equation*}
        Furthermore, \(V(t)\) retains the limiting distribution \(\pi\) as \(t\uparrow\infty\).
    \end{proposition}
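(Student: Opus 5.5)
The plan is to analyse \(V^{(\varepsilon)}\) one redraw interval at a time, show that its one-step transition probabilities match those of the claimed birth-death process to first order in \(\varepsilon\), pass to the limit by generator (or compensator) convergence, and then check that \(\pi\) is invariant and attracting for the limit.

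\emph{Step 1: dynamics within one redraw interval.} Fix a redraw time \(t_0 = m + r\varepsilon\) and suppose the queue lengths just before it are \(y\). I would exploit the redraw as follows.
\begin{itemize}
\item After the redraw, the \(y_i\) arrival times of queue \(i\) are i.i.d.\ with density \(\eta_i\) on \(M^-(t_0)=[t_0-m,t_0)\), independent of the past. This makes the sequence \(V^{(\varepsilon)}(t_0)\), sampled at redraw times, a Markov chain.
\item An entity with arrival time \(u\) departs at \(u+m\). By \(m\)-periodicity of \(\eta_i\), its departure time has density \(\eta_i(s-m)=\eta_i(s)\) on \([t_0,t_0+m)\).
\item Hence each of the \(y_i\) entities departs during \([t_0,t_0+\varepsilon)\) independently, with probability \(\int_{t_0}^{t_0+\varepsilon}\eta_i\,\mathrm{d}\mu\).
\item Arrivals occurring inside \([t_0,t_0+\varepsilon)\) carry arrival times in that same interval, so they cannot depart before the next redraw.
\item Arrivals occur at rate \(\eta_i(t)f(V^{(\varepsilon)-}[t]+e_i)/f(V^{(\varepsilon)-}[t])\).
\end{itemize}
Consequently, on a fixed finite set of states, the one-step transition probabilities are, up to \(o(\varepsilon)\) uniformly, a birth on component \(i\) with probability \(\int_{t_0}^{t_0+\varepsilon}\eta_i\,\mathrm{d}\mu\,f(y+e_i)/f(y)\) and a death on component \(i\) with probability \(y_i\int_{t_0}^{t_0+\varepsilon}\eta_i\,\mathrm{d}\mu\). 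Two or more events in one interval have probability \(O(\varepsilon^2)\).

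\emph{Step 2: convergence to the birth-death process.} Summing the per-interval transition rates over the intervals and letting \(\varepsilon\downarrow0\), the compensators of the birth and death counting processes of \(V^{(\varepsilon)}\) should converge almost surely in \(L^1_{\mathrm{loc}}\) to \(\int \eta_i f(V^-+e_i)/f(V^-)\,\mathrm{d}\mu\) and \(\int V^-_i\eta_i\,\mathrm{d}\mu\). I would work with integrals of \(\eta_i\) rather than pointwise values, since \(\eta_i\) is only left-continuous and locally integrable, which mirrors \Cref{lem:l1}. I would then invoke the same convergence criterion (Corollary 4.46 of \cite{jacod2003}) used in \Cref{thm:sampler}. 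An alternative route is the Ethier--Kurtz theorem on convergence of Markov chains to a time-inhomogeneous jump process via generators, after adding time as a coordinate.

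\emph{Main obstacle.} I expect the hardest step to be that birth rates \(f(y+e_i)/f(y)\) may be unbounded on infinite \(R_Y\), so the estimates above are uniform only on finite sets. I would handle this by localization. I would stop all processes at the first exit from a large finite box \(B_K\), prove convergence of the stopped processes, and then show that \(\sup_\varepsilon\pr(\text{exit } B_K \text{ before time } t)\to0\) as \(K\uparrow\infty\). For that tightness bound I would compare with the equilibrium bound from \Cref{thm:sampler}, or alternatively use a Lyapunov function built from \(\pi\).

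\emph{Step 3: limiting distribution.} Invariance of \(\pi\) follows from detailed balance, which holds at every \(t\). From \cref{eqn:mass},
\[
\frac{\pi(y+e_i)}{\pi(y)}=\frac{f(y+e_i)}{(y_i+1)f(y)},
\]
so that
\[
\pi(y)\,\frac{f(y+e_i)}{f(y)}\,\eta_i(t)=\pi(y+e_i)\,(y_i+1)\,\eta_i(t).
\]
Thus \(\pi\) is invariant for every generator \(\mathcal{G}_t\), and therefore for the \(m\)-periodic transition kernel over one period. For convergence to \(\pi\) I would take the skeleton chain \(V(pm)\), \(p=1,2,\dots\), as in \Cref{thm:sampler}. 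Because \(\int_{M(t)}\eta_i\,\mathrm{d}\mu=1\), from any state there is positive probability of no births and all deaths within one period. The zero state is therefore an accessible atom at which the chain can remain, which gives positive Harris recurrence and aperiodicity. Convergence of \(V(t)\) for non-integer multiples of \(m\) then follows from the convergence of the skeleton together with invariance of \(\pi\) under the within-period kernels.
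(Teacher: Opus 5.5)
Your proposal is correct. Its convergence step is essentially the paper's, but you establish $\pi$ for the limit by a different route.

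\emph{Convergence.} The paper also splits $V^{(\varepsilon)}$ into arrival and departure counting processes and notes that redraws leave the arrival intensity unchanged. It writes the departure intensity as the post-redraw hazard: the number of redrawn customers still in queue $i$ times $\eta_i(t)/\bigl(1-\int_{[t-\zeta^{(\varepsilon)}(t),\,t]}\eta_i\,\mathrm{d}\mu\bigr)$. This is the continuous-time form of your per-interval departure probability $\int_{t_0}^{t_0+\varepsilon}\eta_i\,\mathrm{d}\mu$. The paper then dominates this by $2V_i^-\eta_i$ and concludes with dominated convergence and Corollary 4.46 of Jacod and Shiryaev. Your localization for unbounded birth rates is extra care that the paper omits; it simply treats convergence of the arrival intensities as immediate.

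\emph{Invariance and ergodicity.} The paper never examines the limiting generator. It first shows that $\pi$ is already stationary for each $V^{(\varepsilon)}$: with history-independent $\eta^*$, $g$ factorizes as $\prod_{(i,k)}\eta_i(x_{i,k})$, so redrawing positions i.i.d.\ from $\varphi_i$ preserves $j_X(\,\cdot\,\Vert\,M[t])$ in the equilibrium of the point-process sampler. It then transfers stationarity through $V^{(\varepsilon)}(t)\to V(t)$ and handles ergodicity with a one-sentence appeal to irreducibility. You instead verify $\pi(y)\gamma_t(y\rightarrow y+e_i)=\pi(y+e_i)\gamma_t(y+e_i\rightarrow y)$ directly for each $t$. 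You prove ergodicity through the period-$m$ skeleton with the accessible atom at $0$, mirroring the void-probability argument of \Cref{thm:sampler}.

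\emph{What each route buys.} Yours is self-contained: it needs neither the redraw-invariance argument nor the equilibrium of \Cref{thm:sampler}, and it makes explicit the convergence-to-$\pi$ step that the paper only asserts. The paper's route instead shows that the birth-death sampler literally inherits its invariant law from the point-process sampler, which is the conceptual point the proposition is meant to make.

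\emph{Points to handle when writing it out.}
First, passing from $\pi\mathcal{G}_t=0$ to $\pi P_{s,t}=\pi$ on an infinite $R_Y$ with unbounded birth rates needs non-explosion of the limit. Your localization should supply this, but you should state it.
Second, $\eta_i$ is only left-continuous and locally integrable, so it may be unbounded. Your $o(\varepsilon)$ and $O(\varepsilon^2)$ error terms should therefore be expressed through $\int_{I_r}\eta_i\,\mathrm{d}\mu$ over the redraw intervals $I_r$. This is harmless, since $\sum_r\bigl(\int_{I_r}\eta_i\,\mathrm{d}\mu\bigr)^2\le\max_r\int_{I_r}\eta_i\,\mathrm{d}\mu\cdot\int_{[m,T]}\eta_i\,\mathrm{d}\mu\to0$.
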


    \begin{proof}
        {We briefly return to the point process interpretation to demonstrate that the service-time redraws retain \(\pi\) as a stationary distribution of \(V^{(\varepsilon)}\). As specified by the local Janossy density \(j_X(\,\cdot\,\;\Vert\;M[x])\) from the proof of \Cref{thm:sampler}, the distribution of the point locations given the point counts of \(X\) in \(M(x)\) is determined by \(g\). In the event that \(\eta^*\) is independent of the process history, \(g\) must factorize such that
        \begin{equation*}
            g(\{x_{1,1},\dots,x_{1,y_1}\},\dots,\{x_{d,1},\dots,x_{d,y_d}\}) = \prod_{(i,k)\in K(y)}\eta_i(x_{i,k})\,.
        \end{equation*}
        Therefore, for the point process sampler in equilibrium, the probability density function for the location of the points in \(U^{(\varepsilon)}(t)\) given \(V^{(\varepsilon)}(t) = y\) is simply \(\prod_{(i,k)\in K(y)}\varphi_i(t, t_{i,k})\). Given that the operation of redrawing the point locations from this density at some fixed time \(t\) has \(j_X(\,\cdot\,\;\Vert\;M[t])\) as a stationary distribution, and \(V^{(\varepsilon)-}(t) = V^{(\varepsilon)}(t)\) almost surely, it is clear that the addition of the service-time redraws leaves the stationary distribution of the queue-length process unchanged.

        For each \(\varepsilon\), consider decomposing the queue-length process \(V^{(\varepsilon)}\) into a difference of counting processes, such that \(V^{(\varepsilon)}(t) = N_A^{(\varepsilon)}(t) - N_D^{(\varepsilon)}(t)\), where \(N_A^{(\varepsilon)} = (N_A^{(\varepsilon)}(t))_{t\in\mathbb{R}_{\geq 0}}\) is the counting process corresponding to point arrivals and \(N_D^{(\varepsilon)} = (N_D^{(\varepsilon)}(t))_{t\in\mathbb{R}_{\geq 0}}\) is the counting process corresponding to point departures. Similarly, let the limiting queue-length process \(V\) be expressed by the decomposition \(V(t) = N_A(t) - N_D(t)\) for arrival and departure counting processes \(N_A = (N_A(t))_{t\in\mathbb{R}_{\geq 0}}\) and \(N_D = (N_D(t))_{t\in\mathbb{R}_{\geq 0}}\) respectively. These decompositions allow us to characterize the processes as \(2d\)-dimensional temporal point processes with conditional intensity functions, enabling a proof strategy very similar to that of \Cref{thm:sampler}.

        It should be highlighted that we have implicitly redefined the queue-length processes to exist on the index set \(\mathbb{R}_{\geq 0}\), instead of \(\mathbb{R}_{\geq m}\) as stated in the proposition statement. This is inconsequential for the statement, but enables the clean decompositions into counting processes. We will assume an admissible common probability distribution for the counting processes on \(M^-(m)\) across all \(\varepsilon\). The service-time redraws don't begin to occur until \(t = m\), so no departures should occur in this initial interval.

        The arrival rates for \(t \in \mathbb{R}_{\geq m}\) are straightforward. Given that a service-time redraw leaves the queue lengths unchanged, the arrival rates will always be the same as the conditional intensities of the original point process. For some \(\varepsilon \in \mathbb{R}_{>0}\), the rate at which entities join the \(i\)th queue for \(t \in \mathbb{R}_{\geq m}\) is
        \begin{equation*}
            \lambda_{A,i}^{(\varepsilon)*}(t) = \frac{f(V^{(\varepsilon)-}[t] + e_i)}{f(V^{(\varepsilon)-}[t])}\eta_i(t)\,,
        \end{equation*}
        which trivially yields the candidate arrival rates of the limiting process
        \begin{align*}
            \lambda_{A,i}^*(t) & = \lim_{\varepsilon \downarrow 0}\lambda_{A,i}^{(\varepsilon)}(t\mid \mathcal{H}_{V,t-})\quad(i = 1, \dots, d) \\
                               & = \frac{f(V^-[t] + e_i)}{f(V^-[t])}\eta_i(t)
        \end{align*}
        with respect to its history \(\mathcal{H}_{V,t-}\). It is immediate that \(\lambda_A^{(\varepsilon)}(\,\cdot\, \mid \mathcal{H}_{V,\,\cdot\,-}) \rightarrow \lambda_A(\,\cdot\, \mid \mathcal{H}_{V,\,\cdot\,-})\) in \(L^1_{\textrm{loc}}(\mathbb{R}_{\geq m}, \mu)\) almost surely.

        Now let us consider the departure rates. For \(t \in \mathbb{R}_{\geq m}\), let \(\zeta^{(\varepsilon)}(t) \in [0, \varepsilon)\) be the amount of time that has passed since the last service-time redraw. For \(\varepsilon\in(0,m]\), all entities will have experienced a service-time redraw at least once before leaving their queue. In this regime, the departure rate from the \(i\)th queue for \(t \in \mathbb{R}_{\geq m}\) is the \(\varepsilon\)-repeating hazard rate
        \begin{align*}
            \lambda_{D,i}^{(\varepsilon)*}(t) & = (N_{A,i}^{(\varepsilon)-}[t - \zeta^{(\varepsilon)}\{t\}] - N_{D,i}^{(\varepsilon)-}[t])\frac{\varphi_i(t - \zeta^{(\varepsilon)}[t], t - m)}{1 - \int_{[t - \zeta^{(\varepsilon)}(t), t]}\varphi_i(\,\cdot\, - \zeta^{(\varepsilon)}[\,\cdot\,], \,\cdot\, - m)\mathrm{d}\mu} \\
                                              & = (N_{A,i}^{(\varepsilon)-}[t - \zeta^{(\varepsilon)}\{t\}] - N_{D,i}^{(\varepsilon)-}[t])\frac{\eta_i(t)}{1 - \int_{[t - \zeta^{(\varepsilon)}(t), t]}\eta_i\mathrm{d}\mu}\,,
        \end{align*}
        giving the candidate departure rates of the limiting process
        \begin{align*}
            \lambda_{D,i}^*(t) & = \lim_{\varepsilon\downarrow 0}\lambda_{D,i}^{(\varepsilon)}(t\mid \mathcal{H}_{V,t-})\quad(i = 1, \dots, d) \\
                               & = V^-_i(t)\eta_i(t)\,.
        \end{align*}
        For the \(i\)th queue, fix \(\delta\in(0,m]\) to be a small enough number such that \(\int_{[t - \zeta^{(\varepsilon)}(t), t - \zeta^{(\varepsilon)}(t) + \varepsilon]}\eta_i\mathrm{d}\mu < 1/2\) for all \(t\) and all \(\varepsilon < \delta\). Then, for all \(\varepsilon < \delta\), the expression
        \begin{equation*}
            2V_i^-(t)\eta_i(t) > (N_{A,i}^-[t - \zeta^{(\varepsilon)}\{t\}] - N_{D,i}^-[t])\frac{\eta_i(t)}{1 - \int_{[t - \zeta^{(\varepsilon)}(t), t]}\eta_i\mathrm{d}\mu}
        \end{equation*}
        dominates the departure rates. As \(\eta_i\) is locally integrable and \(V_i^-\) is piecewise constant and almost surely finite, it follows that the dominating function \(2V_i^-(\,\cdot\,)\eta_i(\,\cdot\,)\) is also locally integrable in \(t\). By Lebesgue's dominated convergence theorem, \(\lambda_D^{(\varepsilon)}(\,\cdot\, \mid \mathcal{H}_{V,\,\cdot\,-}) \rightarrow \lambda_D(\,\cdot\, \mid \mathcal{H}_{V,\,\cdot\,-})\) in \(L^1_{\textrm{loc}}(\mathbb{R}_{\geq m}, \mu)\) almost surely.

        Corollary 4.46 of \cite{jacod2003} then guarantees that \(\lambda_A^*\) and \(\lambda_D^*\) are indeed the conditional intensity functions of the limiting process, such that \((N_A^{(\varepsilon)}, N_D^{(\varepsilon)}) \rightarrow (N_A, N_D)\) in distribution. A simple application of the continuous mapping theorem, with mapping \((u, v) \mapsto u(t) - v(t)\) applied to the counting processes, ensures \(V^{(\varepsilon)}(t) \rightarrow V(t)\) in distribution, confirming \(\pi\) as a stationary distribution of \(V\). The irreducibility property is clearly retained by \(V\), hence the limiting distribution of \(V(t)\) as \(t\uparrow\infty\) is still \(\pi\). The conditional intensities \(\lambda_A^*(t)\) and \(\lambda_D^*(t)\) evaluated at \(V^-(t) = y\) yield the birth-death process jump rates \(\gamma_t\).}
    \end{proof}

    \section{Additional simulation details}\label{app:experiments}

    For all simulations, whenever the state/queue length \(s\) was updated and the conditional intensities or transition rates needed to be recalculated, we performed the full calculation, even if a distribution-specific computational shortcut was available. This was to ensure the point process simulations adhered to \Cref{alg:point} exactly. More specifically, for the Sherrington-Kirkpatrick and stochastic neural network models, we performed the matrix multiplication \(Ws\) at every conditional intensity or transition rate update. A computational shortcut would be to, instead, simply add or subtract the \(i\)th column of \(W\) to/from the previous calculation of \(Ws\) when a point is added to or removed from the state respectively.

    To ensure the \(3\,150\) simulations were completed in a reasonable amount of time, multiple nodes of a high-performance computing cluster were used in parallel. Each simulation was assigned to a unique job, and no more than one job was run on a node at any time. Every job ran on a single machine with two Intel Xeon E5-2650 v4 processors and hyper-threading disabled.

    We have made all of the code and generated data used in our work, along instructions to reproduce the results, available at \href{https://github.com/cameronastewart/point-process-sampler}{\texttt{https://github.com/cameronastewart/point-process-sampler}}. All code is written in Python (version 3.6 and above) with minimal dependencies: NumPy is sufficient to run all simulations, while Matplotlib and a LaTeX distribution are also necessary to plot the results. Extremely small differences in \textsc{ess} values between machines are expected due to variations in floating-point arithmetic implementations. Reproducing \textsc{ess/second} exactly is not expected.

    \raggedbottom
    \bibliographystyle{apalike}
    \bibliography{references}}

\end{document}